\newtheorem{theorem}{Theorem}
\newtheorem{definition}{Definition}
\newtheorem{corollary}{Corollary}
\newtheorem{prop}{Proposition}
\newtheorem{lemma}{Lemma}
\newtheorem{example}{Example}
\newtheorem{remark}{Remark}
\date{ }
\author{Shakir Ali$^{1}$, Atif Ahmad Khan$^{1,*}$, Abhishek Kesarwani$^2$\\
	\small{$^1$Department of Mathematics, Faculty of Science, }\\
	\small{Aligarh Muslim University, Aligarh 202002, India}\\
	\small{shakir.ali.mm@amu.ac.in, atifkhanalig1997@gmail.com}\\
	\small{$^2$Insight Centre for Data Analytics,  University College Cork,  Cork, Ireland}\\
	\small{akesarwani@ucc.ie}\\ 
	\small{ * Corresponding author email: atifkhanalig1997@gmail.com}}
\begin{document}
\title{On the construction of Cauchy MDS matrices over Galois rings via nilpotent elements and Frobenius maps}
\maketitle
\begin{abstract}

	Let $s,m$ be the positive integers and $p$ be any prime number. Next, let  $GR(p^s,p^{sm})$ be a Galois ring of characteristic $p^s$ and cardinality $p^{sm}$. In the present paper, we explore the construction of Cauchy MDS matrices over Galois rings. Moreover, we introduce a new approach that considers nilpotent elements and  Teichmüller set of Galois ring $GR(p^s,p^{sm})$ to reduce the number of entries in these matrices. Furthermore, we construct $p^{(s-1)m}(p^m-1)$ distinct functions with the help of Frobenius automorphisms. These functions preserve MDS property of matrices. Finally, we prove some results using  automorphisms and isomorphisms of the Galois rings that can be used to generate new Cauchy MDS matrices.

\textit{Keywords:} MDS matrix, Galois ring, Cauchy matrix, Teichmüller set, nilpotent element  \\
\textit{2020 Mathematics Subject Classification:} 94A60, 15A99, 13B05, 15B99, 15B05

\end{abstract}

\section{Introduction}
	The concept of confusion and diffusion in the design of encryption systems was originally introduced by Claude Shannon in his seminal paper titled ``Communication Theory of Secrecy Systems" \cite{18}. In this context, the primary objective of the confusion layer is to conceal the correlation between the encryption key and the resulting ciphertext, whereas the  diffusion  is a cryptographic technique that would ensure that the effect of one or more then one plaintext digit would be evenly spread out to a number of ciphertext digits. When these principles are applied within an iterated block cipher this guarantees that every individual bit within the message and each secret-key bit exert a nonlinear influence on every bit composing the ciphertext. The branch number of the diffusion matrix being used determines how much small change in the input affects the output of the diffusion layer. It is more difficult for differential and linear attacks to be successful with higher branch number. One effective way to achieve this is by using a Maximum Distance Separable (MDS) matrix, which is known for providing strong diffusion property.
The concept of multipermutation discussed in  \cite{17,21} offers a simple way to represent perfect diffusion formally. Vaudenay in \cite{21} note that a linear multipermutation is same as MDS matrix. MDS matrices are crucial in modern ciphers such as Advanced Encryption Standard (AES) \cite{3}, SQUARE \cite{4}, SHARK \cite{15}, and hash functions \cite{001}. These MDS matrices are also used to create hash functions, playing a central role in hash functions like MAELSTROM-0 \cite{5}, and the PHOTON family of lightweight hash functions \cite{6}. It is worth noting that if $M$ is an MDS matrix used for encryption, then $M^{-1}$ is used for decryption. Consequently, it becomes imperative to select MDS matrices with computationally efficient inverses, especially for software and hardware implementations, as opposed to Feistel-based constructions that do not necessitate inverse transformations.

Certain direct methods for the construction of MDS matrices have already been proposed, involving Cauchy matrices and Vandermonde matrices (see, \cite{2}, \cite{7}, \cite{9}, \cite{10} and \cite{16}, for details). Notably, Youssef et al. \cite{22} introduced a technique to derive involutory MDS matrices from Cauchy matrices, building on the work described in~\cite{10}. Gupta and Ray \cite{7} expanded on this approach by presenting various types of Cauchy matrices suitable for constructing MDS matrices, including involutory Cauchy MDS matrices derived from the additive subgroup of the finite field $F_{2^m}$. Furthermore, Lacan and Fimes \cite{10} introduced a construction of MDS matrices from Vandermonde matrices, and Sajadieh et al.~\cite{16} contributed involutory MDS matrix constructions using Vandermonde matrices. Gupta et al.~\cite{8} later established a connection between Cauchy-based and Vandermonde-based MDS matrix constructions, demonstrating that generalized-Cauchy matrices possess both MDS and involutory properties. In \cite{2}, Cui et al.  gave a construction of higher-level MDS matrices in nested SPNs over finite commutative ring of characteristic 2. The interplay between Cauchy, Vandermonde, and circulant matrices, along with involutory and orthogonal considerations, constitutes a rich research landscape in the field of cryptography, offering valuable insights and solutions for the development of secure cryptographic algorithms.

\textbf{Our contribution:-}
In this paper, we explore the construction of Cauchy Maximum Distance Separable (MDS) matrices over Galois rings. Throughout this  paper, $GR(p^s,~p^{sm})$ will represent the Galois ring of characteristic $p^s$ and cardinality $p^{sm}$ where $m,~s$ are positive integers and $p$ be any prime number.  And $F_{p^m}$, denotes a finite field of characteristic $p$ and cardinality $p^m.$ Let $\mathcal{N}(GR(p^s,~p^{sm}))$  and $U(GR(p^s,~p^{sm}))$ be the set of nilpotent and unit elements of $GR(p^s,~p^{sm})$, respectively. Our approach introduces a novel method for constructing Cauchy matrices by utilizing the Teichmüller set \cite[Chapter 6]{919}. For a Galois ring \(GR(p^s, ~p^{sm})\), the Teichmüller set is \(\tau = \{0, 1, \xi, \xi^2, \dots, \xi^{p^m-2}\}\), where \(o(\xi) = p^m - 1\). The main results of this paper are outlined below. To the best of our knowledge, these results were not known before.

\begin{enumerate}
	\item[(i)] We demonstrate that the matrix \(A = \left[ \frac{1}{x_i - y_j} \right]\) is an MDS matrix of order \(k\) for distinct \(x_i, y_j \in \tau - \{0\}\) with \(1 \leq i, ~j \leq k\).
	\item [(ii)] We show that the matrix \(A = \left[ \frac{1}{x_i + y_j} \right]\) is an MDS matrix of order \(k\) for distinct \(x_i, y_j \in \tau' = \left\{0, 1, \xi, \dots, \xi^{\lceil \frac{p^m-2}{2}\rceil}\right\}\) with \(1 \leq i,~ j \leq k\).
	\item[(iii)] We propose a reduction in the number of elements by involving nilpotent elements, proving that the matrix \(A = \left[ \frac{1}{x_i + x_j + ~l} \right]\) is an MDS matrix of order \(k\) for distinct \(x_i \in \tau' = \left\{0, 1, \xi, \dots, \xi^{\lceil \frac{p^m-2}{2}\rceil}\right\}\) with \(1 \leq i, ~j \leq k\) and \(l \in\mathcal{N}(GR(p^s,~ p^{sm}))\).
\end{enumerate}

Additionally, we investigate distinct functions associated with isomorphisms and automorphisms between Galois rings that preserves MDS property. In particular, we extend the findings of \cite{33}, where the authors identified \(m \cdot (2^m - 1)\) unique functions through isomorphisms expressed as \(f_{su}^c: \beta_1 \mapsto (\beta_2^{su}) \cdot c\). Here, \(\beta_1\) and \(\beta_2\) represent primitive elements in \(F_{2^m}/p_1(x)\) and \(F_{2^m}/p_2(x)\) respectively, with \(c \in F^*_{2^m}\), and \(su = e \cdot 2^i\), where \(1 \leq e \leq 2^m - 2\), \(\gcd(e, 2^m - 1) = 1\), \(p_1(\beta_2^{su}) = 0\), and \(0 \leq u, i \leq m - 1\), where \(F_{2^m}/p_1(x)\) represent a finite field $F_{2^m}$ associated with the polynomial $p_1(x)$. These functions facilitate the generation of  new MDS matrices over \(F_{2^m}/p_2(x)\) from an existing MDS matrix over \(F_{2^m}/p_1(x)\), where \(p_1(x)\) and \(p_2(x)\) are irreducible polynomials over \(F_2\).

Our findings significantly enhance the understanding and construction of MDS matrices over Galois rings, with potential applications in cryptography  and related fields.

\textbf{Organization of paper:-} This paper is structured as follows:  In Section 2, we provide some basic  definitions and results that we use in later section. In Section 3, we present our main results on the construction of Cauchy MDS matrices over $GR(p^s,~p^{sm}).$  Further, we discuss this construction for Cauchy matrices of TYPE-I and TYPE-II in more details. In Section 4, we provide distinct functions which maps MDS matrices to MDS matrices.
In Section 5, we justify our results with illustrative examples of Cauchy matrices. Finally, we conclude this paper in Section 6.

\section{Preliminaries}\label{sec_pre}
	In this section, we define key notions, some well-known facts about Galois rings and state some important results. We begin our discussions with the following definition:
\begin{definition}\cite[Definition 3.2]{111}
	A finite commutative ring with unity $R$ such that the set of its zero divisors including $0$ constitutes a principal ideal $( p )$ with prime $p$  (i.e., $R/( p )$ is an integral domain) is called a Galois ring.
\end{definition}
\begin{example}
	The rings $\mathbb{Z}_4$, $\mathbb{Z}_8$, and $\mathbb{Z}_9$ are finite (commutative) rings with unity for which the set of zero divisors (including $0$) constitutes a principal ideal $( p = 2 )$, $(p = 2 )$, and $( p = 3 )$, respectively. Therefore, $\mathbb{Z}_4 (= \mathbb{Z}_{2^2})$, $\mathbb{Z}_8 (= \mathbb{Z}_{2^3})$, and $\mathbb{Z}_9 (= \mathbb{Z}_{3^2})$ are Galois rings. Note that $( p = 2 )$, $( p = 2 )$, and $( p = 3 )$ are the unique maximal ideals of $\mathbb{Z}_{2^2}$, $\mathbb{Z}_{2^3}$, and $\mathbb{Z}_{3^2}$, respectively.
\end{example}

Let $p$ be a fixed prime and $s$ be a positive integer. First, we consider the following canonical projection
\begin{equation}
	\mu : \mathbb{Z}_{p^s} \rightarrow \mathbb{Z}_p
\end{equation}
which is defined by
\begin{equation}
	\mu(c) = c \ (\text{mod}\ p).
\end{equation}
The map $\mu$ can be extended naturally to the following map
\begin{equation}
	\mu : \mathbb{Z}_{p^s}[x] \rightarrow \mathbb{Z}_p[x]
\end{equation}
which is defined by
\begin{equation}
	\mu(a_0 + a_1x + \dots + a_nx^n) = \mu(a_0) + \mu(a_1)x + \dots + \mu(a_n)x^n.
\end{equation}
This extended $\mu$ is a ring homomorphism with kernel $(p)$.
Let $f(x)$ be a polynomial in $\mathbb{Z}_{p^s}[x]$. Then, $f(x)$ is called basic irreducible if $\mu(f(x))$ is irreducible in $\mathbb{Z}_p[x]$. A Galois ring is constructed as
\begin{equation}
	GR(p^s,~p^{sm}) = \mathbb{Z}_{p^s}[x]/(f(x)),
\end{equation}
where $f(x)$ is a monic basic irreducible polynomial in $\mathbb{Z}_{p^s}[x]$ of degree $m$. The elements of $GR(p^s,p^{sm})$ are residue classes of the form
\begin{equation}
	a_0 + a_1x + \dots + a_{m-1}x^{m-1} + (f(x)),
\end{equation}
where $a_i \in \mathbb{Z}_{p^s}$, $(0 \leq i \leq m - 1)$. The ring homomorphism $\mu$ induces a ring homomorphism $\tilde{\mu}$
\begin{equation}\label{eqn21}
	\tilde{\mu} : GR(p^s,~p^{sm}) = \mathbb{Z}_{p^s}[x]/(f(x)) \rightarrow F_{p^m} = \mathbb{Z}_p[x]/(\mu(f(x)))
\end{equation}
which is defined by
\begin{equation}
	\tilde{\mu}(g(x) + (f(x))) = \mu(g(x)) + (\mu(f(x))),~where~ g(x) \in \mathbb{Z}_{p^s}[x].
\end{equation}
A polynomial $h(x)$ in $\mathbb{Z}_{p^s}[x]$ is called a basic primitive polynomial if $\mu(h(x))$ is a primitive polynomial in $\mathbb{Z}_p[x]$. It is a well-known fact that there is a monic basic primitive polynomial $h(x)$ of degree $m$ over $\mathbb{Z}_{p^s}$ and $h(x)|(x^{p^m-1} - 1)$ in $\mathbb{Z}_{p^s}[x]$. Let $h(x)$ be a monic basic primitive polynomial in $\mathbb{Z}_{p^s}[x]$ of degree $m$. Consider the following element
\begin{equation}\label{eqn22}
	\xi = x + (h(x)) \in GR(p^s,~p^{sm}) = \mathbb{Z}_{p^s}[x]/(h(x)).
\end{equation}
Then, the order of $\xi$ is $p^m - 1$. Teichmüller representatives are defined as follows:
\begin{equation}
	\tau = \{0, 1, \xi, \xi^2, \ldots, \xi^{p^m-2}\}.
\end{equation}
Also, every element $t \in GR(p^s,~p^{sm})$ can be uniquely represented in the form
\begin{equation}
	t = t_0 + pt_1 + p^2t_2 + \cdots + p^{s-1}t_{s-1},
\end{equation}
where $t_i \in T$, $(0 \leq i \leq s - 1)$. Using this notation, the following map $\sigma$ define as
\begin{equation}
	\sigma : GR(p^s,~p^{sm}) \rightarrow GR(p^s,~p^{sm})
\end{equation}
by
\begin{equation}
	\sigma(t) = t_0^p + pt_1^p + p^2t_2^p + \cdots + p^{s-1}t_{s-1}^p.
\end{equation}
The following facts are well known about the map $\sigma$ defined above:
\begin{enumerate}
	\item $\sigma$ is a ring automorphism of $GR(p^s,~p^{sm})$.
	\item $\sigma$ fixes every element of $\mathbb{Z}_{p^s}$.
	\item $\sigma$ is of order $m$ and generates the cyclic Galois group of $GR(p^s,~p^{sm})$ over $\mathbb{Z}_{p^s}$.
\end{enumerate}
For further studies on Galois rings, we refer readers to see the reference \cite{11}. 
\begin{example}
	Consider the ring $\mathbb{Z}_{p^s}$, where $p$ is a prime number and $s$ is a positive integer. Clearly, 1 is the identity of $\mathbb{Z}_{p^s}$ and the set of its zero divisors including 0 forms a maximal ideal $(p)$. Hence, $\mathbb{Z}_{p^s}$ is a Galois ring with $p^s$ elements.
\end{example}

\begin{definition}\cite[Definition 6]{20}
	Let $GR(p^s,~p^{sm})$ be a Galois ring and  $M$ be a  matrix of order $k$ over $GR(p^s,~p^{sm})$. Then, $M$ is an MDS matrix if every square submatrices of $M$ are non-singular. 
\end{definition}
\begin{definition}\label{def1}
	Let $R$ denotes a finite commutative ring with unity and $U(R)$ be the set of units of $R$. Then, a matrix $A$  defined as
	\begin{eqnarray*}
		A&=&\Big[\frac{1}{x_{i}-y_{j}}\Big],
	\end{eqnarray*}
	is said to be Cauchy matrix of the first kind of order $k$ if for any $x_{i},~y_{j}\in U(R)$ with $1\leq i,j\leq k,$ and the following conditions are satisfied: 
	\begin{enumerate}
		\item For $1\leq i\neq j\leq k$, $x_{i}-x_{j},~y_{j}-y_{i} \in U(R).$
		\item For $1\leq i,~ j\leq k$, $x_{i}-y_{j}\in U(R).$
	\end{enumerate}
	Moreover, the determinant of $A$ is given by,
	\begin{eqnarray*}
		det(A)=\frac{\prod_{i=2}^{k}\prod_{j=1}^{i-1}(x_{i}-x_{j})(y_{j}-y_{i})}{\prod_{i=1}^{k}\prod_{j=1}^{k}(x_{i}-y_{j})}.
	\end{eqnarray*}
	
\end{definition}
\begin{remark}\cite{7}\label{r1}
	It is straightforward to check that every submatrix of a Cauchy matrix is a Cauchy matrix.
\end{remark}
\begin{definition}
	Let $R$ be a finite commutative ring with unity. Then, a matrix $A$ defined as
	\begin{eqnarray*}
		A&=&\Big[\frac{1}{x_{i}+y_{j}}\Big],
	\end{eqnarray*}
	is said to be Cauchy matrix of the second kind of order $k$ if for any $x_{i},~y_{j}\in U(R)$  with $1\leq i,j\leq k$ and the following conditions are satisfied: 
	\begin{enumerate}
		\item For $1\leq i\neq j\leq k$, $x_{i}-x_{j},~y_{j}-y_{i} \in U(R).$
		\item For $1\leq i,~ j\leq k$, $x_{i}+y_{j}\in U(R).$
	\end{enumerate} Also, the determinant of this matrix is given by,
	\begin{eqnarray*}
		det(A)=\frac{\prod_{i=2}^{k}\prod_{j=1}^{i-1}(x_{i}-x_{j})(y_{i}-y_{j})}{\prod_{i=1}^{k}\prod_{j=1}^{k}(x_{i}+y_{j})}.
	\end{eqnarray*}
	
\end{definition}
\begin{definition}\label{d11}
	\textbf{(Generalized Cauchy matrix)} 	Let $R$ be a finite commutative ring with unity and  $x_{i},~y_{j},~ u_{i},~v_{j} \in U(R)~(0\leq i,j \leq k-1)$ such that $x_{i}-y_{j},~x_{i}-x_{j},$ and $y_{j}-y_{i} \in U(R)$. Then, a $k\times k$ matrix $A=(a_{i,j})$ with
	\begin{eqnarray*}
		a_{i,j}=\frac{u_{i}v_{j}}{x_{i}-y_{j}},
	\end{eqnarray*}
	is called a Generalized Cauchy matrix (GC matrix).
	\begin{remark}
		If we take $u_i=v_j=1 $ for $1\leq i,~j\leq k-1$ in the Definition \ref{d11}, then we get Cauchy matrix defined in Definition \ref{def1}.
	\end{remark}
\end{definition}
\begin{definition}
	A square matrix B is called an involutory matrix if  $B^2=I,~i.e., ~B=B^{-1}.$
\end{definition}
\begin{remark}
	A square matrix over a ring is non-singular if and only if its determinant is a unit.
\end{remark}
\begin{lemma}\cite[Lemma 7]{20}
	Let $\mu$ be the epimorphism from $GR(p^s,~p^{sm})$ to $F_{p^m}$ and $M=(a_{ij})$ be a $k\times k$ matrix over $U(GR(p^s,~p^{sm}))$, where $a_{ij}\in U(GR(p^s,~p^{sm}))$. Suppose $\overline{M}=(\mu(a_{ij}))$ over $F_{p^m}$. Then, $\mu(det(M))=\det(\overline{M})$, where $\det(M)$ is the determinant of $M$.
\end{lemma}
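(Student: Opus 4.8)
The plan is to expand both determinants via the Leibniz (permutation) formula and then push the homomorphism $\mu$ through the sum and the products term by term. Writing $S_k$ for the symmetric group on $\{1,\dots,k\}$, the determinant of $M$ is
\[
\det(M)=\sum_{\sigma\in S_k}\mathrm{sgn}(\sigma)\prod_{i=1}^{k}a_{i,\sigma(i)},
\]
and the goal is to show that this maps under $\mu$ to the same expression built from the entries $\mu(a_{i,\sigma(i)})$, which is precisely $\det(\overline{M})$.

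The key steps, in order, are as follows. First I would recall that $\mu$ is a unital ring homomorphism from $GR(p^s,~p^{sm})$ onto $F_{p^m}$, so that $\mu(a+b)=\mu(a)+\mu(b)$, $\mu(ab)=\mu(a)\mu(b)$, and $\mu(1)=1$; the last identity forces $\mu(-1)=-1$, since $\mu(1)+\mu(-1)=\mu(0)=0$. Second, I would apply $\mu$ to the Leibniz expansion above and use additivity to move $\mu$ inside the sum over $S_k$. Third, for each permutation $\sigma$, I would use multiplicativity together with $\mu(\pm 1)=\pm 1$ to obtain $\mu\big(\mathrm{sgn}(\sigma)\prod_{i}a_{i,\sigma(i)}\big)=\mathrm{sgn}(\sigma)\prod_{i}\mu(a_{i,\sigma(i)})$. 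Collecting the terms then yields
\[
\mu(\det(M))=\sum_{\sigma\in S_k}\mathrm{sgn}(\sigma)\prod_{i=1}^{k}\mu(a_{i,\sigma(i)})=\det(\overline{M}),
\]
as required.

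There is essentially no hard step here: the statement is the general fact that unital ring homomorphisms commute with the determinant, specialized to $\mu:GR(p^s,~p^{sm})\to F_{p^m}$. The only point requiring a moment's care is the treatment of the signs $\mathrm{sgn}(\sigma)\in\{1,-1\}$, namely verifying that $\mu$ carries $-1$ to $-1$ rather than to some other element; this is immediate from $\mu$ being unital, as noted above. Everything else is a routine, purely formal manipulation of the permutation sum, valid over any commutative ring and for any order $k$, and does not use the hypothesis that the $a_{ij}$ are units (that hypothesis is carried over only because it is the setting in which the lemma will later be applied).
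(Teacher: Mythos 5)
Your argument is correct: expanding $\det(M)$ by the Leibniz formula and pushing the ring homomorphism $\mu$ through the sum, the products, and the signs is exactly the standard proof that determinants commute with unital ring homomorphisms, and your remark that the unit hypothesis on the $a_{ij}$ plays no role here is also accurate. Note that the paper itself offers no proof of this statement --- it is quoted verbatim from Lemma 7 of the cited reference \cite{20} --- so there is no in-paper argument to compare against; your proof is complete and fills that gap in the expected way.
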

\begin{theorem}\cite[Theorem 1]{20}\label{t5}
	Let $\overline{\mu}$ be the epimorphism from $GR(p^s,~p^{sm})$ to $F_{p^m}$, and let $M = (a_{ij})$ be a $k \times k$ matrix over $U(GR(p^s,~p^{sm}))$, where $a_{ij} \in U(GR(p^s,~p^{sm}))$. The matrix $M$ is an MDS matrix if and only if $\overline{M} = (\mu(a_{ij}))$ over $F_{p^m}$ is an MDS matrix of order $k$.
	
\end{theorem}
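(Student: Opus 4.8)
The plan is to reduce the MDS property, which is a statement about \emph{all} square submatrices, to a single determinant computation and then transport that computation across the homomorphism $\mu$ by means of the already-established Lemma 7. First I would fix notation for submatrices: given index sets $I, J \subseteq \{1, \dots, k\}$ with $|I| = |J| = t$, write $M_{I,J}$ for the $t \times t$ submatrix of $M$ with rows indexed by $I$ and columns indexed by $J$, and observe that the corresponding submatrix of $\overline{M}$ is exactly $\overline{M}_{I,J} = (\mu(a_{ij}))_{i \in I,\, j \in J}$. Because $\mu$ is applied entrywise, the assignment $M_{I,J} \mapsto \overline{M}_{I,J}$ is a size-preserving bijection between the square submatrices of $M$ and those of $\overline{M}$.

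The second step is the arithmetic bridge between ``unit'' and ``nonzero''. Since $\mu$ is a ring epimorphism of $GR(p^s,~p^{sm})$ onto the field $F_{p^m}$ whose kernel is the maximal ideal $(p)$, an element $a \in GR(p^s,~p^{sm})$ is a unit if and only if $a \notin (p)$, that is, if and only if $\mu(a) \neq 0$. Combining this with the Remark that a square matrix over a ring is non-singular precisely when its determinant is a unit, I obtain, for any square submatrix $N = M_{I,J}$, the chain $N \text{ non-singular over } GR(p^s,~p^{sm}) \iff \det(N) \in U(GR(p^s,~p^{sm})) \iff \mu(\det(N)) \neq 0$ in $F_{p^m}$.

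Third, I would invoke Lemma 7. Each submatrix $N = M_{I,J}$ is itself a square matrix whose entries lie in $U(GR(p^s,~p^{sm}))$, so Lemma 7 applies to $N$ and yields $\mu(\det(N)) = \det(\overline{N})$, where $\overline{N} = \overline{M}_{I,J}$. As $F_{p^m}$ is a field, $\det(\overline{N}) \neq 0$ is equivalent to $\overline{N}$ being non-singular over $F_{p^m}$. Threading these equivalences together shows that $N$ is non-singular over $GR(p^s,~p^{sm})$ if and only if $\overline{N}$ is non-singular over $F_{p^m}$.

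Finally, both directions of the theorem drop out of the bijection of the first step: $M$ is MDS, i.e.\ every square submatrix $M_{I,J}$ is non-singular, if and only if every corresponding submatrix $\overline{M}_{I,J}$ is non-singular, i.e.\ $\overline{M}$ is MDS. The only point requiring genuine care---and the closest thing to an obstacle---is justifying that Lemma 7 may be applied to every submatrix rather than to $M$ alone; this is immediate once one notes that the entries of any submatrix of $M$ are themselves entries of $M$ and hence are units, so the hypotheses of Lemma 7 are inherited by each $M_{I,J}$.
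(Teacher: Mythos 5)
Your proof is correct. Note that the paper itself gives no proof of this statement---it is quoted verbatim from \cite[Theorem 1]{20} as a preliminary---so there is nothing to compare against; but your argument is exactly the standard one and assembles precisely the ingredients the paper does record: the bijection between square submatrices of $M$ and of $\overline{M}$, the fact that in the local ring $GR(p^s,p^{sm})$ an element is a unit if and only if it lies outside $\ker\mu=(p)$, the Remark that non-singularity over a ring means the determinant is a unit, and Lemma 7 applied to each submatrix (whose unit-entry hypothesis, as you observe, is inherited from $M$). No gaps.
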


\section{\textbf{The main results}}
In \cite{7}, the authors provided the construction of Cauchy MDS matrices over finite fields. In the present section, we  construct Cauchy MDS matrices over a Galois ring $GR(p^s,~p^{sm})$. We begin our discussions with the first result of this paper.
\begin{theorem}\label{t6}
	Let  $GR(p^s,~p^{sm})$ be a Galois ring and $\xi \in GR(p^s,~p^{sm}) $ such that o$(\xi)=p^{m}-1$. Next, let  $\tau=\{0,1,\xi,\xi^2,\dots, \xi^{p^m-2}\}$ be a Teichmüller set. Then, 
	for any distinct elements $x_{1},~x_{2},\dots,~x_{k},~ y_{1},~ y_{2},~\dots,~ y_{k}\in \tau-\{0\}$, the Cauchy matrix $A=\big[\frac{1}{x_{i}-y_{j}}\big]$ is an MDS matrix of order $k$.
	
\end{theorem}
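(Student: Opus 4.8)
The plan is to reduce the problem to the finite field $F_{p^m}$ via the reduction homomorphism $\tilde{\mu}$ of Equation~\eqref{eqn21} and then invoke Theorem~\ref{t5}. The first task is to verify that $A$ is a genuine Cauchy matrix over $GR(p^s,~p^{sm})$ in the sense of Definition~\ref{def1}; that is, all entries must be units, and the differences $x_i - x_j$, $y_j - y_i$, and $x_i - y_j$ must lie in $U(GR(p^s,~p^{sm}))$. The key structural fact I would use is that the maximal ideal of the Galois ring is $(p) = \ker\tilde{\mu}$, so an element is a unit precisely when its image under $\tilde{\mu}$ is nonzero in $F_{p^m}$.

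Next I would exploit that $\tilde{\mu}$ restricts to a bijection from the Teichm\"uller set $\tau$ onto $F_{p^m}$, carrying $\tau-\{0\}$ onto $F_{p^m}^*$; indeed $\tilde{\mu}(\xi)$ is a primitive element, so the powers $\xi^t$ reduce to distinct nonzero elements. Consequently, for the given pairwise distinct elements $x_1,\dots,x_k,y_1,\dots,y_k \in \tau-\{0\}$, the images $\overline{x_i} = \tilde{\mu}(x_i)$ and $\overline{y_j} = \tilde{\mu}(y_j)$ are pairwise distinct nonzero elements of $F_{p^m}$. Since $\tilde{\mu}$ is a ring homomorphism, $\tilde{\mu}(x_i - x_j) = \overline{x_i} - \overline{x_j} \neq 0$ whenever $i \neq j$, and similarly $\tilde{\mu}(y_j - y_i) \neq 0$ and $\tilde{\mu}(x_i - y_j) \neq 0$; hence all these differences lie outside $(p)$ and are therefore units. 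Because $x_i - y_j$ is a unit, so is its reciprocal, and thus $A$ is a Cauchy matrix with every entry in $U(GR(p^s,~p^{sm}))$.

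With the hypotheses of Theorem~\ref{t5} in place, it remains to show that the reduced matrix $\overline{A} = (\tilde{\mu}(\tfrac{1}{x_i - y_j}))$ is MDS over $F_{p^m}$. Since $\tilde{\mu}$ is a homomorphism and $x_i - y_j$ is a unit, $\tilde{\mu}(\tfrac{1}{x_i-y_j}) = \tfrac{1}{\overline{x_i} - \overline{y_j}}$, so $\overline{A} = \big[\frac{1}{\overline{x_i} - \overline{y_j}}\big]$ is precisely the Cauchy matrix over $F_{p^m}$ associated with the distinct elements $\overline{x_i},\overline{y_j}$. I would then argue that every square submatrix of $\overline{A}$ is, by Remark~\ref{r1}, again a Cauchy matrix over the field $F_{p^m}$, whose determinant is given by the explicit product formula of Definition~\ref{def1}. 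Over a field this determinant is a quotient of products of nonzero differences and is therefore nonzero, so every square submatrix is nonsingular and $\overline{A}$ is MDS; this is exactly the finite-field construction of \cite{7}. Applying Theorem~\ref{t5} then transfers the MDS property back to $A$ over $GR(p^s,~p^{sm})$.

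Finally, I expect the main obstacle to be the bookkeeping in the first two steps rather than any deep difficulty: the crux is establishing that distinct nonzero Teichm\"uller representatives remain distinct and nonzero after reduction, so that all the relevant differences are units and the entries of $A$ are well defined. Once this unit-preservation is secured, the field-level MDS property of Cauchy matrices and Theorem~\ref{t5} combine mechanically to yield the result.
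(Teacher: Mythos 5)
Your proposal is correct, and its crucial step coincides with the paper's: both arguments hinge on showing that differences of distinct nonzero Teichm\"uller representatives are units, and both establish this by passing to the residue field via $\tilde{\mu}$ of Equation~\eqref{eqn21} and using that $\tilde{\mu}(\xi)$ still has order $p^m-1$ (the paper phrases this as a contradiction: if $1-\xi^j$ lay in $(p)$ then $\bar{\xi}^j=1$). Where you diverge is in how the MDS conclusion is reached. The paper stays inside the Galois ring: having shown $x_i-x_j$, $y_j-y_i$, $x_i-y_j\in U(GR(p^s,~p^{sm}))$, it reads off from the Cauchy determinant formula of Definition~\ref{def1} that $\det(A)$ is a unit, and then appeals to Remark~\ref{r1} for the submatrices. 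You instead reduce the entire matrix to $\overline{A}=\big[\frac{1}{\overline{x_i}-\overline{y_j}}\big]$ over $F_{p^m}$, observe that it is the classical finite-field Cauchy MDS matrix, and pull the MDS property back through Theorem~\ref{t5}. Your route buys a cleaner treatment of the submatrices --- the field-level MDS statement handles all minors at once, whereas the paper's closing appeal to Remark~\ref{r1} tacitly requires rerunning the unit argument on every submatrix determinant --- at the cost of invoking Theorem~\ref{t5} and the auxiliary fact that $\tilde{\mu}$ is injective on $\tau$ (which is standard: an element of order $p^m-1$ lies in the $p'$-torsion part of $U(GR(p^s,~p^{sm}))$, on which reduction is injective, so distinct Teichm\"uller representatives have distinct images). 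The paper's route is self-contained and works entirely with ring-level unit computations. Both are valid; no gap in either.
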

\begin{proof}
	We are	given a set  $\tau=\{0,1,\xi,\xi^2,\dots, \xi^{p^m-2}\}$ and  $x_{i} \in \tau-\{0\}$, for $1\leq i \leq k$ and $y_{j}\in \tau-\{0,x_{1},x_{2},\dots,x_{k}\}$, for $1\leq j \leq k$ are distinct elements in  $\tau$. Then matrix $A$ is defined as 
	\begin{eqnarray*}
		A&=&\Bigg[\frac{1}{x_{i}-y_{j}}\Bigg],
	\end{eqnarray*}
	and its determinant is given by
	\begin{equation}\label{e1}
		\det(A)=\frac{\prod_{i=2}^{n}\prod_{j=1}^{i-1}(x_{i}-x_{j})(y_{j}-y_{i})}{\prod_{i=1}^{n}\prod_{j=1}^{n}(x_{i}-y_{j})}.
	\end{equation}
	Since $x_{i},y_{j}\in \tau-\{0\}$ and $o(\xi)=p^m-1$, so we have
	$$\xi^{p^m-1}=1.$$
	For $i=0,1,\dots,p^m-2$, $\xi^i$ are units in $GR(p^s,~p^{sm}) $ such that
	$$\xi^i\cdot\xi^{p^m-1-i}=1.$$
	Moreover, all elements of the form $1-\xi^j$ where $0<j<p^m-2$, are also units in $GR(p^s,~p^{sm})$. Let on contrary that $1-\xi^j$ belongs to the maximal ideal $(p)$ of $GR(p^s,~p^{sm})$. Then,  by Equation (\ref{eqn21}), we have $\bar{\xi}^j=1$, which contradicts the fact that the order of $\bar{\xi}$ is $p^m-1$ in $F_{p^m}$. It follows that, for $0\leq i < j\leq p^m-2$, all $\xi^i-\xi^j$ are units in $GR(p^s,~p^{sm})$. Consequently, $x_{j}-x_{i}$, $y_{j}-y_{i}$, and $x_{i}-x_{j}$ are also units in $GR(p^s,~p^{sm})$. Therefore, by Equation (\ref{e1}), we obtain, $\det(A)\in U(GR(p^s,~p^{sm}))$. Then, by Remark \ref{r1}, all of its submatrices are Cauchy matrices. Hence, $A$ is an MDS matrix.
\end{proof}

The construction mentioned above is known as the Cauchy construction of TYPE-I over a Galois ring.
\begin{remark}
	Theorem \ref{t6} is not true for a Cauchy matrix of the second kind, as demonstrated by the following example.
\end{remark}
\begin{example}
	Let $GR(3^2,~(3^2)^2)=\frac{\mathbb{Z}_{3^2}[x]}{(5x^2+2x+4)}$ be Galois ring of characteristic 9 with cardinality 81 and $\xi=x+(5x^2+2x+4)$ such that $o(\xi)=8$. Define $\tau= \{0,1,\xi,\xi^2,\dots, \xi^{7}\}$, and set
	\begin{eqnarray*}
		x_{1}=1 &\space ;& y_{1}=\xi^3,\\
		x_{2}=\xi &\space ;& y_{2}=\xi^4,\\
		x_{3}=\xi^2 &\space ;& y_{3}=\xi^5.
	\end{eqnarray*}
	Since $x_{2}+y_{3}=\xi+\xi^5=\xi-\xi=0$, we can not define Cauchy matrix over the above mentioned entries. Hence, Theorem \ref{t6} is not true for a Cauchy matrix of the second kind.
\end{example}
In the following theorems, we  establish the conditions on the Teichmüller set $\tau$ that allow the existence of the Cauchy matrix of the second kind over Galois rings.
\begin{theorem}\label{t17}
	Let $GR(p^s,~p^{sm})$
	be a Galois ring of characteristic $p^s(p\neq 2)$  and $\xi \in GR(p^s,~p^{sm}) $ such that o$(\xi)=p^{m}-1$. Next, let   $\tau^{'}=\{0,1,\xi,\xi^2,\dots,\xi^{\lceil \frac{p^m-2}{2}\rceil}\}$ be a set, where $\lceil\cdot \rceil$ denotes the greatest integer function. Then, the following statements hold:
	\begin{enumerate}
		\item For distinct $x_{1},x_{2},\dots,x_{k} \in \tau^{'}-\{0\}$ and distinct $y_{1},y_{2},\dots,y_{k} \in \tau^{'}-\{0,x_{1},x_{2},\dots,x_{k}\}$, the Cauchy matrix $A=\big[\frac{1}{x_{i}+y_{j}}\big]$ is an MDS matrix of order $k$.

		\item  For distinct $x_{1},x_{2},\dots,x_{k} \in \tau^{'}-\{0\}$ and $l \in \mathcal{N}(GR(p^s,p^{sm}))$, define $y_{j}=x_{j}+l$,
		the Cauchy matrix $A=\big[\frac{1}{x_{i}+x_{j}+l}\big]$ is an MDS matrix of order $k$.
	\end{enumerate}
\end{theorem}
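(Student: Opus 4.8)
The plan is to treat both statements as Cauchy matrices of the second kind and, in each case, to show that the determinant recorded in that matrix's definition is a unit of $GR(p^s,~p^{sm})$. Since every square submatrix of $A=\big[\frac{1}{x_i+y_j}\big]$ is itself a second-kind Cauchy matrix built from a sub-collection of the same elements, the identical argument will show each such submatrix is non-singular, which is exactly the MDS property. Recalling that the determinant equals $\frac{\prod_{i>j}(x_i-x_j)(y_i-y_j)}{\prod_{i,j}(x_i+y_j)}$, this quotient is a unit precisely when every factor $x_i-x_j$, $y_i-y_j$ (for $i\neq j$) and every denominator $x_i+y_j$ lies in $U(GR(p^s,~p^{sm}))$. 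I would therefore reduce the whole theorem to proving that these three families of elements are units.

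The differences are handled exactly as in the proof of Theorem \ref{t6}: writing the elements as powers of $\xi$ and reducing modulo $(p)$ through (\ref{eqn21}), a difference $\xi^a-\xi^b=\xi^b(\xi^{a-b}-1)$ with $0<|a-b|<p^m-1$ reduces to $0$ in $F_{p^m}$ only if $\bar\xi^{a-b}=1$, contradicting $o(\bar\xi)=p^m-1$; hence all $x_i-x_j$ and $y_i-y_j$ are units. The crux is the sums $x_i+y_j=\xi^a+\xi^b$. Factoring out the smaller power gives a unit times $1+\xi^{d}$ with $d=|a-b|$, and reducing modulo $(p)$ shows $1+\xi^{d}$ is a non-unit exactly when $\bar\xi^{d}=-1$, i.e. when $d\equiv\frac{p^m-1}{2}\pmod{p^m-1}$; here $p\neq 2$ guarantees that $-1=\bar\xi^{(p^m-1)/2}$ is the unique element of order $2$ in $F_{p^m}^{\ast}$. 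This is exactly the place where the truncation defining $\tau'$ enters: every nonzero element of $\tau'$ is a power $\xi^e$ with $0\le e\le\lfloor\frac{p^m-2}{2}\rfloor=\frac{p^m-3}{2}$, so any two such exponents differ in absolute value by at most $\frac{p^m-3}{2}<\frac{p^m-1}{2}$, and the forbidden value $\frac{p^m-1}{2}$ is never reached. I expect this bookkeeping to be the main obstacle, and it simultaneously explains, via the excluded example preceding the theorem, why the statement collapses for the full set $\tau$ (there $\xi^{(p^m-1)/2}$ is available and $\xi^a+\xi^{a+(p^m-1)/2}=0$).

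Statement (1) now follows at once: all denominators $x_i+y_j$ are units, so $\det(A)$ and the determinant of each of its second-kind Cauchy submatrices is a quotient of units, hence a unit, and $A$ is MDS. Alternatively one could invoke Theorem \ref{t5}, since once all entries are units the reduced matrix $\overline{A}$ is a second-kind Cauchy matrix over $F_{p^m}$, which is MDS by the field result of \cite{7}.

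For statement (2) I would put $y_j=x_j+l$ and revisit the three families. The off-diagonal differences are unchanged because $y_i-y_j=x_i-x_j$. For the denominators $x_i+y_j=x_i+x_j+l$ I split into two cases: when $i=j$ the term is $2x_i+l$, and when $i\neq j$ it is $(\xi^{a_i}+\xi^{a_j})+l$ where $\xi^{a_i}+\xi^{a_j}$ is already a unit by the computation above. In both cases the leading part is a unit $u$ and $l$ is nilpotent---it lies in the maximal ideal $(p)$, in which $(p)^s=0$---so $u+l=u\,(1+u^{-1}l)$ is a unit, its inverse being the finite geometric series $u^{-1}\sum_{i\ge 0}(-u^{-1}l)^{i}$. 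The hypothesis $p\neq 2$ is needed here too, to make $2x_i$, and hence $2x_i+l$, a unit. Consequently every denominator is a unit, $\det(A)$ is again a quotient of units, the same holds for every submatrix, and $A$ is MDS.
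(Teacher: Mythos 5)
Your proposal is correct and follows essentially the same route as the paper: reduce modulo the maximal ideal $(p)$, use that $o(\bar\xi)=p^m-1$ to show all differences $x_i-x_j$, $y_i-y_j$ and sums $x_i+y_j$ are units (the truncation of $\tau'$ preventing the exponent gap $\frac{p^m-1}{2}$), handle the diagonal terms $2x_i+l$ via $p\neq 2$ and unit-plus-nilpotent, and conclude via the Cauchy determinant formula and the submatrix remark. You supply a couple of details the paper leaves implicit (the explicit bound $\lfloor\frac{p^m-2}{2}\rfloor=\frac{p^m-3}{2}$ and the geometric-series inverse of $u+l$), but the argument is the same.
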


\begin{proof}(a)
	Given that distinct $x_{i},~y_{j} \in \tau'-\{0\}$, for $1\leq i,j \leq k$. Then, we have
	\begin{eqnarray*}
		A&=&\Bigg[\frac{1}{x_{i}+y_{j}}\Bigg],
	\end{eqnarray*}
	and
	\begin{equation}\label{e2}
		\det(A)=\frac{\prod_{i=2}^{n}\prod_{j=1}^{i-1}(x_{i}-x_{j})(y_{j}-y_{i})}{\prod_{i=1}^{n}\prod_{j=1}^{n}(x_{i}+y_{j})}.
	\end{equation}
	
	Since $x_{i},~y_{j}\in \tau'
	-\{0\}$ and $o(\xi)=p^m-1,$ so
	$\xi^{p^m-1}=1.$ Also,	for $i=0,1,\dots,p^m-2$, $\xi^i$ are units in $GR(p^s,~p^{sm}) $ as 
	$\xi^i\cdot\xi^{p^m-1-i}=1.$
	For any  integer $0<j<\lceil \frac{p^m-2}{2}\rceil$,  we want to check that whether $1+\xi^j$ are   units or nilpotents in $GR(p^s,~p^{sm})$. For this, let us assume on the contrary that  $1+\xi^j \in (p)$. Then by Equation (\ref{eqn21}), we have
	\begin{eqnarray*}
		\bar{\xi}^j&=&-1,\\
		\bar{\xi}^{j}&=&p^m-1,\\
		\bar{\xi}^{2j}&=&(p^m-1)^2,\\
		\bar{\xi}^{2j}&=&1,
	\end{eqnarray*} where $0<2j<p^m-2$, which contradicts the order of $\bar{{\xi}}$ being $p^m-1$. It follows that for $0\leq i < j\leq \lceil \frac{p^m-2}{2}\rceil$, all $\xi^i+\xi^j$ are units in $GR(p^s,~p^{sm})$. Consequently, $x_{j}-x_{i},~ y_{j}-y_{i}$, and $x_{i}+x_{j}$ are units in $GR(p^s,~p^{sm})$. From Equation (\ref{e2}), we conclude that $\det(A)\in U(GR(p^s,~p^{sm}))$. Therefore, by Remark \ref{r1}, all of its submatrices are Cauchy matrices. Hence, $A$ is MDS matrix.
\end{proof}
\begin{proof} (b)
	Given that $x_{i}\in \tau'-\{0\}$, for $1\leq i \leq k$ and $l$ is any fixed nilpotent element of $GR(p^s,~p^{sm})$, for $1\leq j \leq k$, we  define $$y_{j}=x_{j}+l,$$ then the Cauchy matrix
	\begin{eqnarray*}
		A&=&\Bigg[\frac{1}{x_{i}+y_{j}}\Bigg]\\
		&=&\Bigg[\frac{1}{x_{i}+x_{j}+l}\Bigg].
	\end{eqnarray*} 
	The determinant of $A$ is defined as
	\begin{eqnarray*}
		\det(A)&=&\frac{\prod_{i=2}^{n}\prod_{j=1}^{i-1}(x_{i}-x_{j})(x_{j}-l-x_{i}+l)}{\prod_{i=1}^{n}\prod_{j=1}^{n}(x_{i}+y_{j}+l)}\\
		&=&\frac{\prod_{i=2}^{n}\prod_{j=1}^{i-1}(x_{i}-x_{j})(x_{j}-x_{i})}{\prod_{i=1}^{n}\prod_{j=1}^{n}(x_{i}+x_{j}+l)}.
	\end{eqnarray*}
	Application of Theorem \ref{t6} yields, $x_{i}-x_{j},~ y_{j}-y_{i} \in U(GR(p^s,~p^{sm})).$
	To show that $\det(A)$ is a unit element in $GR(p^s,~p^{sm})$, first we prove that $x_{i}+x_{j}+l \in U(GR(p^s,~p^{sm}))$. We consider the following two cases:
	\begin{enumerate}
		\item \textbf{Case-I}:  For $i=j$, $x_{i}+x_{j}+l=x_{i}+x_{i}+l=2x_{i}+l$ as 2 is unit in $GR(p^s,p^{sm})$, this gives $2x_{i}+l$ is a unit element in $U(GR(p^s,~p^{sm})).$ 
		\item \textbf{Case-II}:  For $i\neq j$, we have $x_{i}+x_{j}+l$ is a unit in  $GR(p^s,~p^{sm})$ by part (a).
	\end{enumerate}
	Therefore, $\det(A)\in U(GR(p^s,~p^{sm})).$ Hence, by Remark \ref{r1} determinant of every submatrix of matrix $A$ is invertible. This shows that $A$ is an MDS matrix.
\end{proof}
The construction in Theorem \ref{t17}$(b)$ is known as Cauchy TYPE-II construction over  Galois rings.
In this construction, we have reduced the number of distinct entries in the Cauchy matrix. In the previous construction (Theorem \ref{t6}), referred to as Cauchy TYPE-I, the number of distinct entries were at most $k^2$. However, in Cauchy TYPE-II discussed above, we have reduced the number of distinct entries to at most $\frac{k(k+1)}{2}$. 
\begin{theorem}\label{t8}
	Let $GR(2^s,~2^{sm})$ be a Galois ring of characteristic $2^s\neq 2$  and $\xi \in GR(2^s,~2^{sm}) $ such that o$(\xi)=2^{m}-1$. Next, let  $\tau=\{0,1,\xi,\xi^2,\dots, \xi^{2^m-2}\}$ be a set. For  distinct $x_{i},~y_{j} \in \tau-\{0\}$, $1\leq i,j\leq k$, the Cauchy matrix 
	\begin{eqnarray*}
		A=\Big[\frac{1}{x_{i}+y_{j}}\Big],
	\end{eqnarray*}
	is an MDS matrix of order $k$.
\end{theorem}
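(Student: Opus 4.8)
The plan is to follow the proof of Theorem \ref{t6} almost verbatim, isolating the one feature that genuinely changes in characteristic two: in the residue field $F_{2^m}$ one has $-1=1$, so addition and subtraction coincide modulo the maximal ideal $(2)$. First I would record the second-kind Cauchy determinant
\[
\det(A)=\frac{\prod_{i=2}^{k}\prod_{j=1}^{i-1}(x_{i}-x_{j})(y_{i}-y_{j})}{\prod_{i=1}^{k}\prod_{j=1}^{k}(x_{i}+y_{j})},
\]
and reduce the whole statement to the single claim that every factor occurring here is a unit of $GR(2^s,~2^{sm})$. Granting that claim, $\det(A)$ is a quotient of units, hence a unit, and by Remark \ref{r1} every square submatrix of $A$ is again a second-kind Cauchy matrix built from a subcollection of the same distinct elements, so the identical argument makes each of their determinants a unit as well. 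This yields non-singularity of all square submatrices, i.e. $A$ is MDS.

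The numerator factors are disposed of exactly as in Theorem \ref{t6}. Since the $x_i,y_j\in\tau-\{0\}$ are distinct powers of $\xi$ and $\bar\xi$ has order $2^m-1$ in $F_{2^m}$, any difference $\xi^a-\xi^b$ with $a\neq b$ lies outside the ideal $(2)$: otherwise its image would give $\bar\xi^{\,a-b}=1$, contradicting the order of $\bar\xi$. Hence all $x_i-x_j$ and $y_i-y_j$ with $i\neq j$ are units, and no new work is needed for the top of the fraction.

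The essential new point is the denominator factors $x_i+y_j$, and this is where I expect the only real obstacle to sit. Here I would interpret the hypothesis ``distinct $x_i,y_j$'' as in Theorem \ref{t6}, namely that all the chosen elements are pairwise distinct, so in particular $x_i\neq y_j$ for every $i,j$. Passing to the residue field and using $-1=1$ gives $\overline{x_i+y_j}=\bar x_i+\bar y_j=\bar x_i-\bar y_j$; since the reduction map is injective on $\tau$, distinctness forces $\bar x_i\neq\bar y_j$, so this difference is nonzero and $x_i+y_j\notin(2)$, i.e. it is a unit. This is precisely the step that fails in odd characteristic: for $p\neq2$ one can have $\xi^a=-\xi^b$, which is exactly the collision that forced the restriction to $\tau'$ in Theorem \ref{t17}; in characteristic two no such collision exists, which is why the full set $\tau-\{0\}$ is admissible.

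Assembling these, all factors are units, so $\det(A)\in U(GR(2^s,~2^{sm}))$, and Remark \ref{r1} propagates the conclusion to every square submatrix, proving that $A$ is MDS. As an alternative that makes the characteristic-two phenomenon even more transparent, once the entries are known to be units one could instead invoke Theorem \ref{t5}: the reduced matrix is $\bar A=\big[\,1/(\bar x_i-\bar y_j)\,\big]$, an ordinary first-kind Cauchy matrix over $F_{2^m}$ on distinct elements and hence MDS, and Theorem \ref{t5} lifts this MDS property back to $GR(2^s,~2^{sm})$.
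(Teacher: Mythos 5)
Your proposal is correct and follows essentially the same route as the paper: show every factor $x_i-x_j$, $y_i-y_j$, $x_i+y_j$ in the Cauchy determinant is a unit (the key point being that in characteristic $2^s$ the sum $x_i+y_j$ reduces to $\bar x_i-\bar y_j\neq 0$ in $F_{2^m}$, which is just another way of phrasing the paper's observation that $2^m-1$ is odd so no collision $\bar\xi^{a}=-\bar\xi^{b}$ can occur), then invoke Remark \ref{r1} to propagate unit determinants to all submatrices. Your closing alternative via Theorem \ref{t5} --- reducing to a first-kind Cauchy matrix over $F_{2^m}$ and lifting the MDS property --- is a clean shortcut the paper does not take, but the main argument is the same.
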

\begin{proof}
	Given that $GR(2^s,~2^{sm})$ is a Galois ring of characteristic equal to $2^s$ and cardinality $2^{sm}$. Then, in view of relation (\ref{eqn22}), there exists $\xi \in GR(2^s,~2^{sm})$ such that $o(\xi)=2^m-1$ and 
	$\xi^{2^m-1}=1.$
	For any $\xi^i$, $\xi^j$ in $\tau$, $\xi^i-\xi^j$ should be unit, for $0\leq i<j\leq 2^m-2.$
	Since $2^m-1$ is odd, so
	\begin{eqnarray*}
		\xi^i+\xi^j \in U(GR(2^s,2^{sm})),~~\textup{for}~ 0\leq i<j\leq 2^m-2.
	\end{eqnarray*}
	For distinct $x_{i},~y_{j} \in \tau-\{0\}$, $1\leq i,~j\leq k$, the determinant of matrix
	\begin{eqnarray*}
		A&=&(a_{i,j})=\Big[\frac{1}{x_{i}+y_{j}}\Big],
	\end{eqnarray*}
	is defined as,
	\begin{eqnarray*}
		\det(A)=\frac{\prod_{i=2}^{n}\prod_{j=1}^{i-1}(x_{i}-x_{j})(y_{j}-y_{i})}{\prod_{i=1}^{n}\prod_{j=1}^{n}(x_{i}+y_{j})}.
	\end{eqnarray*}
	Since $x_{i}-x_{j},~y_{i}-y_{j}$ and $x_{i}+y_{j} \in U(GR(2^s,2^{sm}))$, we conclude that $A$ is an MDS matrix.
\end{proof}
\begin{theorem}\label{t9}
	Let $GR(p^s,~p^{sm})$ be a Galois ring of characteristic $p^s(p\neq 2)$ and $\xi \in GR(p^s,~p^{sm}) $ such that o$(\xi)=p^{m}-1$. Next, let  $\tau=\{0,1,\xi,\xi^2,\dots, \xi^{p^m-2}\}$ be a set. For a distinct $x_{i}=\xi^{\sigma_{i}},~y_{j}=\xi^{\eta_{j}} \in \tau-\{0\}$, $1\leq \sigma_{i},~\eta_{j} \leq p^m-2$ such that $\sigma_{i}-\eta_{j}\neq \frac{p^m-1}{2}$. Then, the Cauchy matrix defined by
	\begin{eqnarray*}
		A=\Bigg[\frac{1}{x_{i}+y_{j}}\Bigg],
	\end{eqnarray*}
	is an MDS matrix of order $k$.
\end{theorem}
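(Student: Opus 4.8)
The plan is to recognize $A=\big[\frac{1}{x_i+y_j}\big]$ as a Cauchy matrix of the second kind and to verify, via the reduction map $\tilde{\mu}$ of Equation (\ref{eqn21}), that the unit conditions in its definition hold under the hypothesis $\sigma_i-\eta_j\neq\frac{p^m-1}{2}$. By Remark \ref{r1} every square submatrix of $A$ is again a Cauchy matrix built from a sub-collection of the same $x_i$'s and $y_j$'s, so these conditions are inherited; hence it suffices to show that all the relevant entries are units, which by the determinant formula forces every such determinant to be a unit, and a square matrix over $GR(p^s,~p^{sm})$ is non-singular precisely when its determinant is a unit.

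First I would handle the numerator. Since the $x_i=\xi^{\sigma_i}$ are distinct, the exponents $\sigma_i$ are pairwise distinct modulo $p^m-1$; the same argument as in Theorem \ref{t6} shows $x_i-x_j$ is a unit, because $\xi^{\sigma_i}-\xi^{\sigma_j}\in(p)$ would give $\bar{\xi}^{\sigma_i-\sigma_j}=1$ in $F_{p^m}$, contradicting $o(\bar{\xi})=p^m-1$. The analogous statement gives that $y_i-y_j$ is a unit for $i\neq j$.

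The key step is the denominator, where I must show $x_i+y_j=\xi^{\sigma_i}+\xi^{\eta_j}$ is a unit for all $i,j$. Suppose to the contrary that $x_i+y_j\in(p)$. Applying $\tilde{\mu}$ yields $\bar{\xi}^{\sigma_i}+\bar{\xi}^{\eta_j}=0$, i.e. $\bar{\xi}^{\sigma_i-\eta_j}=-1$. Because $p\neq 2$ and $\bar{\xi}$ is a primitive $(p^m-1)$-th root of unity in $F_{p^m}$, the unique element of order $2$ is $\bar{\xi}^{(p^m-1)/2}=-1$; hence $\sigma_i-\eta_j\equiv\frac{p^m-1}{2}\pmod{p^m-1}$, contradicting the hypothesis. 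Here one should read the hypothesis modulo $p^m-1$: in the stated range $1\leq\sigma_i,\eta_j\leq p^m-2$ the difference could equal $\pm\frac{p^m-1}{2}$, and both values give $\bar{\xi}^{\sigma_i-\eta_j}=-1$, so the exclusion must really be understood as $\sigma_i-\eta_j\not\equiv\frac{p^m-1}{2}$. I expect this identification of $-1$ with $\bar{\xi}^{(p^m-1)/2}$, together with the care needed about the residue class of $\sigma_i-\eta_j$, to be the main obstacle.

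Finally, with all $x_i-x_j$, $y_i-y_j$, and $x_i+y_j$ units, the determinant formula for the second-kind Cauchy matrix expresses $\det(A)$ as a product of units divided by a product of units, and is therefore a unit in $GR(p^s,~p^{sm})$. The same reasoning applies verbatim to every square submatrix by Remark \ref{r1}, so all square submatrices are non-singular. Consequently, $A$ is an MDS matrix of order $k$.
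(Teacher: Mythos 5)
Your proof is correct and follows essentially the same route as the paper's: both arguments reduce the MDS property to checking that $x_i-x_j$, $y_i-y_j$ and $x_i+y_j$ are units, and both identify the only failure of the denominator condition as $\xi^{\sigma_i-\eta_j}=-1$, i.e.\ $\sigma_i-\eta_j\equiv\frac{p^m-1}{2}\pmod{p^m-1}$. Your observation that the hypothesis must be read modulo $p^m-1$ (so that $\sigma_i-\eta_j=-\frac{p^m-1}{2}$ is also excluded) is a valid refinement of a point the paper only handles implicitly by taking the difference to be non-negative.
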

\begin{proof}
	Given that $GR(p^s,p^{sm})$ is a Galois ring of odd characteristic, for $1\leq i<j\leq p^m-2$ $\xi^i-\xi^j$ is a element of $U(GR(p^s,p^{sm}))$. Therefore, for $x_{i}=\xi^{\sigma_{i}}$, $\xi^{\sigma_{i}}-\xi^{\sigma_{j}}$ is a unit whenever $\sigma_{i}\neq\sigma_{j}$. This gives $\xi^{\sigma_{i}}+\xi^{\sigma_{j}}$ is not unit when $\xi^{\sigma_{i}-\sigma_{j}}=-1~(\sigma_{i}\geq\sigma_{j})$. This implies
	$ \xi^{2(\sigma_{i}-\sigma_{j})}=1$ and hence $ (p^m-1) |2(\sigma_{i}-\sigma_{j})$, that is, 
	$ 2(\sigma_{i}-\sigma_{j})=k(p^m-1)$.
	Moreover, the equality holds only when $k=1$, i.e.,
	$$\sigma_{i}-\sigma_{j}=\frac{p^m-1}{2}.$$
	Consequently, Cauchy matrix $A$ is an MDS matrix whenever $\sigma_{i}-\sigma_{j}\neq\frac{p^m-1}{2}.$
\end{proof}
In the next result, we investigate MDS matrices over the extension of Galois rings via Frobenius automorphisms. One can generate new MDS matrices over Galois ring  extension by applying Frobenius automorphisms.
\begin{theorem}\label{t10}
	Let $l>1$ be any positive integer and $GR(p^s,~p^{sml})$ be a Galois ring. For $1\leq t\leq l-1$, define   automorphisms $\phi^t$ as follows:
	\begin{eqnarray*}
		\phi^{t}:GR(p^s,~p^{sml}) &\rightarrow& GR(p^s,~p^{sml})\\
		\phi^{t}(a_{0}+a_{1}\xi+\dots+a_{l-1}\xi^{l-1})&=&a_{0}+a_{1}\xi^{p^{m\cdot t}}+\dots+a_{l-1}\xi^{(l-1)p^{m\cdot t}},
	\end{eqnarray*}
	where $a_{i}\in GR(p^s,p^{sm})$.  For distinct $x_{i},~y_{j} \in \tau-\{0\}$, $1\leq i,~j\leq k$, the matrix defined by
	\begin{eqnarray*}
		A=\Big[\frac{1}{\phi^{t}(x_{i}+y_{j})}\Big],
	\end{eqnarray*}
	is an MDS matrix.
\end{theorem}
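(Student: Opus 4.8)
The plan is to recognize the matrix $A$ not as a new object built from scratch, but as the entrywise image, under the map $\phi^t$, of a second-kind Cauchy matrix that has \emph{already} been shown to be MDS. Concretely, write $B=\big[\frac{1}{x_i+y_j}\big]$ for the Cauchy matrix of the second kind over $GR(p^s,p^{sml})$ associated with the distinct $x_i,y_j\in\tau-\{0\}$. By the preceding construction of second-kind Cauchy MDS matrices (Theorem \ref{t8} when $p=2$, and Theorem \ref{t9} for odd $p$, applied to the ambient ring $GR(p^s,p^{sml})$ whose Teichm\"uller generator $\xi$ has order $p^{ml}-1$), the matrix $B$ is MDS. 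Thus it suffices to prove two things: (i) $\phi^t$ is a ring automorphism of $GR(p^s,p^{sml})$, and (ii) applying a ring homomorphism entrywise to an MDS matrix again yields an MDS matrix. Once these are in place, the conclusion is immediate.

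For (i), I would identify $\phi^t$ with the $t$-th power of the relative Frobenius of the extension $GR(p^s,p^{sml})/GR(p^s,p^{sm})$. Recall that the standard automorphism $\sigma$ of Section \ref{sec_pre} raises Teichm\"uller coordinates to the $p$-th power and fixes $\mathbb{Z}_{p^s}$; its power $\sigma^{m}$ fixes the subring $GR(p^s,p^{sm})$ pointwise and sends $\xi\mapsto\xi^{p^{m}}$. Hence $\sigma^{mt}$ fixes every coefficient $a_i\in GR(p^s,p^{sm})$ and sends $\xi^i\mapsto\xi^{ip^{mt}}$, which is exactly the displayed action of $\phi^t$; therefore $\phi^t=\sigma^{mt}$ is a ring automorphism. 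The one point that genuinely needs checking is \emph{well-definedness}: since $GR(p^s,p^{sml})=GR(p^s,p^{sm})[\xi]$ with $\xi$ a root of a basic primitive polynomial, the assignment $\xi\mapsto\xi^{p^{mt}}$ extends to a ring homomorphism precisely because $\xi^{p^{mt}}$ is again a root of that same minimal polynomial, i.e.\ a Galois conjugate of $\xi$; bijectivity then follows from $\phi^t$ having finite order dividing $l$.

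For (ii), note first that a ring homomorphism fixing $1$ preserves inverses, so $\phi^t\!\big(\frac{1}{x_i+y_j}\big)=\frac{1}{\phi^t(x_i+y_j)}$, and consequently $A$ is precisely $\phi^t(B)$, the matrix obtained by applying $\phi^t$ to every entry of $B$. Now let $A'$ be any square submatrix of $A$; it is the $\phi^t$-image of the corresponding submatrix $B'$ of $B$. Because the determinant is a signed sum of products of entries and $\phi^t$ is a ring homomorphism fixing $1$, we have $\det(A')=\phi^t(\det(B'))$. Since $B$ is MDS, $\det(B')\in U(GR(p^s,p^{sml}))$, and a ring homomorphism carries units to units, so $\det(A')$ is a unit and $A'$ is non-singular. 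As $A'$ was an arbitrary square submatrix, $A$ is MDS.

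The main obstacle is concentrated entirely in step (i): one must confirm that the defining formula gives a genuine ring automorphism rather than merely a $GR(p^s,p^{sm})$-module map, which hinges on $\xi^{p^{mt}}$ being a root of the minimal polynomial of $\xi$ over $GR(p^s,p^{sm})$. Everything afterwards, namely the identity $\det\circ\,\phi^t=\phi^t\circ\det$ on submatrices together with the preservation of units, is formal and transfers verbatim from the corresponding finite-field arguments.
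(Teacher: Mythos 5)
Your proposal is correct and takes essentially the same route as the paper: both view $A$ as the entrywise $\phi^{t}$-image of the second-kind Cauchy matrix $B=\big[\frac{1}{x_i+y_j}\big]$, use that $\phi^{t}$ is a Frobenius ring automorphism so that the determinant of every (sub)matrix of $A$ is the $\phi^{t}$-image of the corresponding Cauchy determinant, and conclude because automorphisms carry units to units. If anything, you are slightly more careful than the paper, which simply asserts that the $x_i+y_j$ are units rather than appealing to Theorem~\ref{t8} or Theorem~\ref{t9} (a hypothesis the theorem statement itself leaves implicit).
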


\begin{proof}
	Given that $GR(p^s,~p^{sml})$ is a Galois ring.  Then by \cite[Theorem 14.30]{11}, $GR(p^s,~p^{sml})$ is an extension ring of $GR(p^s,~p^{sm})$. By Equation (\ref{eqn22}), there exists $\xi \in GR(p^s,~p^{sml})$ such that
	\begin{eqnarray*}
		o(\xi)&=&p^{ml}-1,~~and\\
		GR(p^s,~p^{sml})&=&\{a_{0}+a_{1}\xi+a_{2}\xi^2+\dots+a_{l-1}\xi^{l-1};~ a_{i}\in GR(p^s,~p^{sm})\},
	\end{eqnarray*}
	and by \cite[Theorem 14.30]{11} define automorphisms as,
	\begin{eqnarray*}
		\phi^t:GR(p^s,~p^{sml})&\rightarrow& GR(p^s,~p^{sml})\\
		\phi^{t}(a_{0}+a_{1}\xi+\dots+a_{l-1}\xi^{l-1})&=&a_{0}+a_{1}\xi^{p^{m\cdot t}}+\dots+a_{l-1}\xi^{(l-1)p^{m\cdot t}}.
	\end{eqnarray*}
	If we take $\xi^i \in \tau$ for some $i$, then $\phi(\xi^i)=\xi^{ip^m} \in \tau-\{0\}$. Thus, we have  
	\begin{eqnarray*}
		\det(A)=\frac{\prod_{i=2}^{n}\prod_{j=1}^{i-1}(\phi(x_{i}-x_{j}))\phi((y_{j}-y_{i}))}{\prod_{i=1}^{n}\prod_{j=1}^{n}\phi((x_{i}+y_{j}))}.
	\end{eqnarray*}
	Since $x_{i}-x_{j},~y_{j}-y_{i},~and~x_{i}+y_{j}$ are units in $U(GR(p^s,~p^{sml}))$, so images of those elements must be unit in $GR(p^s,~p^{sml})$. Thus, we obtain $ \det A\in U(GR(p^s,~p^{sml})).$ Hence, 
	$A$ is invertible.
	By  Remark \ref{r1}, every submatrix of $A$ is invertible and this gives $A$ is a MDS matrix.
\end{proof}
\begin{remark}
	For each $\phi^t$ $(1\leq t \leq l-1),$ by Theorem \ref{t10}, we can see,  that $A$ is MDS if and only if $A^{\phi^t}$ is MDS. Thus, for given  Cauchy matrix $A$, we can generate up to $l$ many Cauchy matrices. 
\end{remark}
\begin{theorem}
	Let $p$ be a odd prime and $GR(p^s,~p^{sm})$ be a Galois ring of characteristic $p^s$ with  cardinality $p^{sm}$ and $\xi \in GR(p^s,~p^{sm}) $ such that o$(\xi)=p^{m}-1$. Next, let  $\tau=\{0,1,\xi,\xi^2,\dots, \xi^{p^m-2}\}$ be a set. For distinct $x_{i},~y_{j} \in \tau-\{0\}$ and $w_{i},~v_{j}\in U(GR(p^s,p^{sm}))$, $1\leq i,~j\leq k$  be a set of units. Then, the  matrix $A=(a_{i,j})$ with
	\begin{eqnarray*}
		a_{i,j}=\frac{w_{i}v_{j}}{x_{i}-y_{j}},
	\end{eqnarray*}
	
	is an MDS matrix of order $k$.
\end{theorem}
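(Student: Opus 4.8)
The strategy is to reduce the claim to Theorem \ref{t6} by exhibiting $A$ as a diagonal rescaling of the ordinary first-kind Cauchy matrix. Observe first that $A$ is precisely the generalized Cauchy matrix of Definition \ref{d11} with $u_i = w_i$: writing $C = \big[\frac{1}{x_i - y_j}\big]$ for the Cauchy matrix of the first kind (Definition \ref{def1}) and introducing the diagonal matrices $D_w = \operatorname{diag}(w_1, \dots, w_k)$ and $D_v = \operatorname{diag}(v_1, \dots, v_k)$, one has the factorization $A = D_w\, C\, D_v$, since the $(i,j)$ entry of $D_w C D_v$ is exactly $w_i \cdot \frac{1}{x_i - y_j} \cdot v_j$. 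Because every $w_i$ and every $v_j$ is a unit, both $D_w$ and $D_v$ are invertible over $GR(p^s,~p^{sm})$.

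Next I would verify that $C$ is an MDS matrix. Since $p$ is odd and the $x_i,~y_j$ are distinct elements of $\tau-\{0\}$, Theorem \ref{t6} applies verbatim and guarantees that $C$ is MDS; equivalently, by the definition of an MDS matrix, every square submatrix of $C$ is non-singular, i.e. has unit determinant. Now fix any $r \times r$ submatrix of $A$, determined by a row index set $I = \{i_1, \dots, i_r\}$ and a column index set $J = \{j_1, \dots, j_r\}$. Restricting the factorization to these rows and columns gives
\begin{equation*}
	A_{I,J} = \operatorname{diag}(w_{i_1}, \dots, w_{i_r}) \cdot C_{I,J} \cdot \operatorname{diag}(v_{j_1}, \dots, v_{j_r}),
\end{equation*}
where $C_{I,J}$ is the corresponding submatrix of $C$. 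Taking determinants and using multiplicativity,
\begin{equation*}
	\det(A_{I,J}) = \Big(\prod_{a=1}^{r} w_{i_a}\Big)\Big(\prod_{b=1}^{r} v_{j_b}\Big)\det(C_{I,J}).
\end{equation*}
The two products are units, being finite products of units, and $\det(C_{I,J})$ is a unit because $C$ is MDS; hence $\det(A_{I,J}) \in U(GR(p^s,~p^{sm}))$, so $A_{I,J}$ is non-singular.

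Since the submatrix $A_{I,J}$ was arbitrary, every square submatrix of $A$ is non-singular, and therefore $A$ is an MDS matrix, as claimed. I do not anticipate a genuine obstacle here: the entire argument is a diagonal-scaling reduction to Theorem \ref{t6}, and the only point demanding care is the bookkeeping that the restricted diagonal factors $\operatorname{diag}(w_{i_a})$ and $\operatorname{diag}(v_{j_b})$ pick out exactly the rows and columns of the chosen submatrix, so that the determinant factors cleanly. Alternatively, one may note, in the spirit of Remark \ref{r1}, that every submatrix of a generalized Cauchy matrix is again a generalized Cauchy matrix and then apply the determinant formula of Definition \ref{d11} to each; the diagonal factorization above is simply the most transparent way to package this observation.
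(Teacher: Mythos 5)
Your proposal is correct and follows essentially the same route as the paper: both rest on the observation that the determinant of the generalized Cauchy matrix equals $(\prod_i w_i)(\prod_j v_j)\det(C)$, a product of units with the ordinary first-kind Cauchy determinant, which is a unit by Theorem \ref{t6}. Your factorization $A = D_w C D_v$ is just a cleaner packaging of the determinant formula the paper quotes directly, and your explicit verification for every submatrix $A_{I,J}$ is in fact more careful than the paper's proof, which checks only that $\det(A)$ itself is a unit (and whose displayed formula contains a typo, writing $x_i+y_j$ in the denominator where the first-kind construction requires $x_i-y_j$).
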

\begin{proof}
	The determinant of matrix $A$ is defined as:
	\begin{eqnarray*}
		\det(A)=w_{1}w_{2}\cdots w_{k}v_{1}v_{2}\cdots v_{k}\frac{\prod_{i=2}^{n}\prod_{j=1}^{i-1}(x_{i}-x_{j})(y_{j}-y_{i})}{\prod_{i=1}^{n}\prod_{j=1}^{n}(x_{i}+y_{j})}.
	\end{eqnarray*}
	Since $x_{i}-x_{j},~y_{i}-y_{j},$ and $x_{i}+y_{j}$ are in $U(GR(p^s,~p^{sm}))$ and products of units are unit, so $\det(A)$ is unit in $U(GR(p^s,~p^{sm})).$ Hence, $A$ is an MDS Cauchy matrix.
\end{proof}
Since we know that involutory matrices, known for their self-inverse and play a pivotal role in the design of block ciphers. However, in our next theorem, we demonstrate that it is impossible for a Cauchy TYPE-II construction to yield an MDS matrix that is also involutory.
\begin{theorem}
	Let $GR(p^s,~p^{sm})$ be a Galois ring of characteristic $p^s$(where \( p \neq 2 \)) and cardinality \( p^{sm} \). Then, there is no Cauchy TYPE-II construction that yields an almost involutory MDS matrix of order 2 over a Galois ring \( GR(p^s,~p^{sm}) \).
	
\end{theorem}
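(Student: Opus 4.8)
The plan is to reduce the almost-involutory condition to a single arithmetic identity in $GR(p^s,~p^{sm})$ and then contradict the defining unit condition of the Cauchy TYPE-II construction. First I would write the order-$2$ Cauchy TYPE-II matrix explicitly. By Theorem~\ref{t17}$(b)$ it has the symmetric form
$$A = \begin{pmatrix} a & b \\ b & c \end{pmatrix}, \qquad a = \frac{1}{2x_{1}+l}, \quad b = \frac{1}{x_{1}+x_{2}+l}, \quad c = \frac{1}{2x_{2}+l},$$
where $x_{1},x_{2} \in \tau'-\{0\}$ are distinct and $l \in \mathcal{N}(GR(p^s,~p^{sm}))$. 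The symmetry $a_{12}=a_{21}$ is immediate since $x_{i}+x_{j}+l$ is symmetric in $i,j$, and by the construction (Case-I and Case-II in the proof of Theorem~\ref{t17}$(b)$, where $p\neq 2$ is used to make $2x_{i}+l$ a unit) every denominator appearing here is a unit, so $a,b,c \in U(GR(p^s,~p^{sm}))$. Note also that the matrix is automatically MDS by Theorem~\ref{t17}$(b)$, so the entire content of the statement lies in ruling out the involutory-type condition.

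Next I would impose the almost-involutory condition $A^{2}=\lambda I$ (a scalar multiple of the identity) and read off the off-diagonal entry. A direct multiplication gives
$$A^{2} = \begin{pmatrix} a^{2}+b^{2} & b(a+c) \\ b(a+c) & b^{2}+c^{2} \end{pmatrix},$$
so the vanishing of the off-diagonal entry forces $b(a+c)=0$. The crucial observation is that $b$ is a \emph{unit} (its denominator $x_{1}+x_{2}+l$ is a unit), hence not a zero divisor, so it may be cancelled in the ring to yield $a+c=0$.

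Finally I would expand $a+c$ over the common denominator,
$$a+c = \frac{1}{2x_{1}+l}+\frac{1}{2x_{2}+l} = \frac{2(x_{1}+x_{2}+l)}{(2x_{1}+l)(2x_{2}+l)}.$$
Since $(2x_{1}+l)(2x_{2}+l)$ is a unit, the relation $a+c=0$ forces $2(x_{1}+x_{2}+l)=0$; as $p\neq 2$ the element $2$ is a unit, whence $x_{1}+x_{2}+l=0$. This contradicts the fact that $x_{1}+x_{2}+l$ must be a unit for the Cauchy TYPE-II matrix to be defined at all, completing the argument. I expect the only delicate point to be the ring-theoretic cancellation step: over a general commutative ring one cannot cancel arbitrary factors, so the proof genuinely relies on $b$ (equivalently $x_{1}+x_{2}+l$) being a unit rather than merely nonzero. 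The hypothesis $p\neq 2$ enters twice and both uses are essential, once to guarantee the diagonal denominators $2x_{i}+l$ are units, and once to pass from $2(x_{1}+x_{2}+l)=0$ to $x_{1}+x_{2}+l=0$.
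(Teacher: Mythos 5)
Your proposal is correct and follows essentially the same route as the paper: both reduce to showing the off-diagonal entry of $A^{2}$, namely $b(a+c)=\frac{2}{(2x_{1}+l)(2x_{2}+l)}$, cannot vanish. The paper simply simplifies this entry directly and observes it is a unit, whereas you phrase the same computation as a contradiction (cancelling the unit $b$ and the unit denominator to force $x_{1}+x_{2}+l=0$); the content is identical.
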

\begin{proof}
	Let $\xi \in \tau$ such that $\xi^{p^m-1}=1$. Suppose $x_{1}, x_{2} \in \tau-\{0\}$, define $y_{1}=x_{1}+l,~y_{2}=x_{2}+l;~~l\in \mathcal{N}(GR(p^s,p^{sm})).$ Then, we have
	\begin{eqnarray*}
		A&=& \begin{bmatrix}
			\frac{1}{x_{1}+y_{1}}&\frac{1}{x_{1}+y_{2}}\\
			\frac{1}{x_{2}+y_{1}}&\frac{1}{x_{2}+y_{2}}
		\end{bmatrix}\\
		&=&\begin{bmatrix}
			\frac{1}{x_{1}+x_{1}+l}&\frac{1}{x_{1}+x_{2}+l}\\
			\frac{1}{x_{2}+x_{1}+l}&\frac{1}{x_{2}+x_{2}+l}
		\end{bmatrix}
	\end{eqnarray*} and
	\begin{eqnarray*} 
		A^2&=&\begin{bmatrix}
			\frac{1}{(2x_{1}+l)^2}+\frac{1}{(x_1+x_2+l)^2}&\frac{1}{(2x_1+l)(x_{1}+x_{2}+l)}+\frac{1}{(x_1+x_2+l)(2x_2+l)}\\
			\frac{1}{(2x_1+l)(x_{1}+x_{2}+l)}+\frac{1}{(x_1+x_2+l)(2x_2+l)}&	\frac{1}{(2x_{2}+l)^2}+\frac{1}{(x_1+x_2+l)^2}
		\end{bmatrix}\\
		&=&\begin{bmatrix}
			a_{11}&a_{12}\\
			a_{21}&a_{22}
		\end{bmatrix}.
	\end{eqnarray*}
	Since,
	\begin{eqnarray*}
		a_{12}&=&\frac{1}{(2x_1+l)(x_{1}+x_{2}+l)}+\frac{1}{(x_1+x_2+l)(2x_2+l)}\\
		&=& \frac{1}{(x_1+x_2+l)}\Bigg(\frac{1}{2x_1+l}+\frac{1}{2x_2+l}\Bigg)\\
		&=&\frac{2}{(2x_1+l)(2x_2+l)}.
	\end{eqnarray*}
	Therefore, $a_{12} \in U(GR(p^s,~p^{sm}))$. This implies $A^2\neq I.$ Hence, there is no Cauchy TYPE-II construction, which is almost involutory.
\end{proof}
\section{\bf Construction of new MDS matrices with the help of fixed MDS matrix}
\subsection{MDS automorphisms over Galois ring}

In \cite{33}, Sakalli et al. constructed a new MDS matrix over a finite field of characteristic 2 with the help of automorphisms. 
In this subsection, we investigate  MDS matrices over the extension of Galois rings and distinct functions related to automorphisms. Moreover, one can generate new MDS matrices over the same Galois extension ring by applying these automorphisms and distinct functions to
any MDS matrix.  We begin our discussion with the following result:
\begin{prop}\label{p1}
	Let $R' = \text{GR}(p^s, ~p^{sm})$  be a Galois ring containing $R = \text{GR}(p^s,~ p^{s})$ as a subring.  Let  $A$ be a $k \times k$ matrix over  $R'$. Suppose $A'$ be the matrix generated by applying any distinct automorphism
	\[
	f_{i}: b = a_{0} + a_{1}\xi + a_{2}\xi^2 + \cdots + a_{m-1}\xi^{m-1} \mapsto a_{0} + a_{1}\xi^{p^i} + \cdots + a_{m-1}\xi^{(m-1)p^i}
	\]
	to the elements of $A$, with $0 \leq i \leq m-1$ and $b \in U(R')$, where $R'=R[\xi]$. Then, the determinant of $A'$ is a zero divisor if and only if $\det(A)$ is a zero divisor.
\end{prop}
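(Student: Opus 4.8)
The plan is to reduce everything to the single identity $\det(A') = f_i(\det(A))$ and then exploit the local structure of the Galois ring. First I would pin down the ambient objects: here $R = GR(p^s,p^s) = \mathbb{Z}_{p^s}$, and since $R' = R[\xi]$ with $\xi$ the root of a basic primitive polynomial $h$ of degree $m$, the set $\{1,\xi,\dots,\xi^{m-1}\}$ is a free basis of $R'$ over $R$. Hence every element of $R'$ has a unique expansion $b = a_{0} + a_{1}\xi + \cdots + a_{m-1}\xi^{m-1}$ with $a_t \in R$, so $f_i$ is well defined on all of $R'$, not merely on units. I would then record that $f_i$ is precisely the Galois automorphism $\sigma^{i}$ of $R'$ over $R$ discussed in Section~\ref{sec_pre}: it fixes $R \supseteq \mathbb{Z}$ and sends $\xi \mapsto \xi^{p^{i}}$, and by \cite[Theorem 14.30]{11} the Frobenius maps permute the roots of $h$ and generate the cyclic Galois group, so each $f_i$ is a genuine ring automorphism of $R'$.

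The key step is that the determinant commutes with the ring homomorphism $f_i$. Writing the Leibniz expansion
\begin{equation*}
\det(A) = \sum_{\pi \in S_k} \operatorname{sgn}(\pi)\prod_{t=1}^{k} A_{t,\pi(t)},
\end{equation*}
we see that $\det$ is a polynomial with coefficients $\pm 1 \in \mathbb{Z}$ in the entries of $A$. Since $f_i$ is a ring homomorphism fixing $\mathbb{Z}$ (in particular $f_i(\pm 1) = \pm 1$) and $A'_{t,\pi(t)} = f_i(A_{t,\pi(t)})$, applying $f_i$ term by term yields
\begin{equation*}
\det(A') = \sum_{\pi \in S_k}\operatorname{sgn}(\pi)\prod_{t=1}^{k} f_i\!\left(A_{t,\pi(t)}\right) = f_i\!\left(\sum_{\pi \in S_k}\operatorname{sgn}(\pi)\prod_{t=1}^{k} A_{t,\pi(t)}\right) = f_i(\det(A)).
\end{equation*}
This is the heart of the argument; the only subtlety is that $f_i$ must respect the signs $\operatorname{sgn}(\pi)$, which it does because it fixes $-1$.

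Finally I would invoke the local structure of $R' = GR(p^s,p^{sm})$: it is a finite local ring with maximal ideal $(p)$, so its non-units are exactly the elements of $(p)$, each of which is nilpotent and therefore a zero divisor; thus in $R'$ the zero divisors coincide with the non-units. Because $f_i$ is an automorphism it maps $U(R')$ bijectively onto itself (from $u\,u^{-1} = 1$ we get $f_i(u)\,f_i(u^{-1}) = 1$), and by bijectivity it maps non-units to non-units. Combining this with $\det(A') = f_i(\det(A))$ gives the chain of equivalences: $\det(A)$ is a zero divisor $\iff \det(A) \notin U(R') \iff f_i(\det(A)) = \det(A') \notin U(R') \iff \det(A')$ is a zero divisor, which is exactly the claim.

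The main obstacle is conceptual rather than computational: one must justify that $f_i$ is an honest ring automorphism that is well defined on \emph{every} entry of $A$ (so that the Leibniz interchange above is legitimate) and identify it with $\sigma^i$. Once this is in place, the finite-local-ring dichotomy ``unit or zero divisor'' closes the argument immediately, and no determinant formula or case analysis is needed.
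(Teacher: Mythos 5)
Your proposal is correct and follows essentially the same route as the paper: identify $f_i$ with the Frobenius automorphism $\sigma^i$, use $\det(A') = f_i(\det(A))$, and conclude from the fact that a ring automorphism of the local ring $GR(p^s,p^{sm})$ preserves the unit/zero-divisor dichotomy. Your write-up is in fact more careful than the paper's, which states the conclusion without spelling out the Leibniz-expansion step or the ``non-unit $=$ zero divisor'' property of finite local rings that your argument makes explicit.
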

\begin{proof}
	By   \cite[Theorem 14.30]{11}, the automorphism of $GR(p^s,~p^{sm})$ over $\mathbb{Z}_{p^s}$ are given as 
	$$a_{0}+a_{1}\xi+a_{2}\xi^{2p^m}+\cdots+a_{m-1}\xi^{m-1}\mapsto a_{0}+a_{1}\xi^{p^m}+a_{2}\xi^2+\cdots+a_{m-1}\xi^{(m-1)p^m}.$$
	These mappings are one-to-one because each element in $\mathbb{Z}_{p^s}$ maps to itself. Therefore, the determinant of any matrix obtained by applying an automorphism to $A$ remains unchanged, whether it is a zero divisor or a unit. Specifically, if $\det(A) \in U(GR(p^s, p^{sm}))$ or $\det(A) \in \mathcal{N}(GR(p^s, p^{sm}))$, then $\det(A') \in U(GR(p^s, p^{sm}))$ or $\det(A') \in \mathcal{N}(GR(p^s, p^{sm}))$, respectively.
\end{proof}
\begin{theorem}\label{t7}
	Let $R'=GR(p^s,p^{sm})$ be a Galois ring containing $R=GR(p^s,p^{s})$ as a subring. Then, there exist $ p^{(s-1)m}(p^{m}-1)$ distinct bijective functions related to the automorphisms in the form
	$f_{i,c}:b=(a_{0}+a_{1}\xi+a_{2}\xi^2+\cdots+a_{m-1}\xi^{m-1})\mapsto (a_{0}+a_{1}\xi^{p^m}+\cdots+a_{m-1}\xi^{(m-1)p^m})\cdot c$, where $b$ is any primitive element of $R'$ and $c \in U(R')$  with $0 \leq i \leq m - 1$. These functions
	preserve the MDS property of a square matrix, i.e., new MDS matrices are
	generated from the existing ones.
\end{theorem}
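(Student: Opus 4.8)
The plan is to break the statement into three independent claims: that each $f_{i,c}$ is a bijection, that exactly $p^{(s-1)m}(p^m-1)$ of them arise, and that they send MDS matrices to MDS matrices. First I would pin down the automorphism part. By \cite[Theorem 14.30]{11}, the Galois group of $R'=GR(p^s,p^{sm})$ over $R=GR(p^s,p^s)$ is cyclic of order $m$, generated by the Frobenius map, so the maps $f_i\colon a_0+a_1\xi+\cdots+a_{m-1}\xi^{m-1}\mapsto a_0+a_1\xi^{p^i}+\cdots+a_{m-1}\xi^{(m-1)p^i}$ with $0\le i\le m-1$ (reading the exponent as $p^i$, consistent with Proposition \ref{p1}) are precisely the $m$ distinct ring automorphisms fixing $R$. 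Each $f_i$ is therefore a bijection preserving units, and since right multiplication $\rho_c\colon x\mapsto xc$ by a unit $c\in U(R')$ is also a bijection of $R'$, the composite $f_{i,c}=\rho_c\circ f_i$ is bijective, which settles the first claim.

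Next I would carry out the counting, which is the heart of the statement. The number of units is $|U(R')|=p^{sm}-|pR'|=p^{sm}-p^{(s-1)m}=p^{(s-1)m}(p^m-1)$, since the maximal ideal $pR'$ has index $p^m$ in $R'$ (because $R'/pR'\cong F_{p^m}$). Because a primitive element $b$ generates $R'$ over $R$, a map of the prescribed form is determined by the image of $b$, namely $f_i(b)\,c$. For a fixed $i$, as $c$ runs over $U(R')$ the element $f_i(b)\,c$ runs over all of $U(R')$ (with $f_i(b)$ a fixed unit), so the set of attainable images is exactly $U(R')$, of size $p^{(s-1)m}(p^m-1)$; this is the source of the stated count. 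The delicate point, which I expect to be the main obstacle, is to make precise the sense in which $(i,c)$ and $(j,d)$ define the \emph{same} function: as honest maps $R'\to R'$ they coincide only when $i=j$ and $c=d$ (test at $1$), which would inflate the total by a factor $m$, whereas identifying them through their action on the fixed primitive generator collapses each image to one function and yields exactly $p^{(s-1)m}(p^m-1)$. I would therefore state the equivalence being used at the outset and verify that, under it, distinct images correspond bijectively to distinct functions.

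Finally, for MDS preservation I would argue submatrix by submatrix. Writing $A'$ for the matrix obtained by applying $f_{i,c}$ entrywise to $A$, any $t\times t$ submatrix of $A'$ indexed by rows $I$ and columns $J$ equals $c\cdot f_i(B)$, where $B$ is the corresponding submatrix of $A$ and $f_i$ is applied entrywise. Since determinants commute with ring homomorphisms, its determinant is $c^{\,t}\,f_i(\det B)$; here $c^{\,t}$ is a unit, and by Proposition \ref{p1} the element $f_i(\det B)$ is a unit if and only if $\det B$ is. Hence the submatrix of $A'$ is non-singular exactly when $B$ is, so every square submatrix of $A'$ is invertible if and only if the same holds for $A$. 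By the definition of an MDS matrix together with Remark \ref{r1}, $A'$ is MDS precisely when $A$ is, which shows that these functions generate new MDS matrices from existing ones and completes the plan.
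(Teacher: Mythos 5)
Your argument for MDS preservation is essentially the paper's: decompose $f_{i,c}=g_c\circ f_i$ into a Frobenius automorphism followed by multiplication by a unit, use the fact that automorphisms preserve units (Proposition \ref{p1}) and that a unit scalar cannot turn a unit determinant into a zero divisor, and then pass to all square submatrices via Remark \ref{r1}. You are, however, more careful than the paper on two points. First, you write the determinant of a $t\times t$ submatrix of $A'$ as $c^{\,t} f_i(\det B)$, whereas the paper's proof asserts $\det(A')=c\cdot\det(A)$; your version is the correct one (the conclusion is unaffected since $c^{\,t}$ is still a unit). Second, the paper's proof never actually carries out the count of $p^{(s-1)m}(p^m-1)$ functions, while you do, and in doing so you put your finger on a real ambiguity: as honest maps $R'\to R'$ the pairs $(i,c)$ give $m\cdot|U(R')|=m\cdot p^{(s-1)m}(p^m-1)$ distinct functions (which is what the cited specialization \cite[Theorem 1]{33}, with its count $m\cdot(2^m-1)$, requires), whereas the stated count $p^{(s-1)m}(p^m-1)=|U(R')|$ only enumerates the possible images of a fixed primitive element. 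Your proposal is correct and, on the counting issue, identifies a discrepancy in the theorem's statement that the paper's own proof does not address.
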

\begin{proof}
	Here we need to show that the properties of being an MDS matrix are satisfied after applying distinct functions with the help of Frobenius
	automorphism. The main idea depends on the fact that every square submatrix of an MDS matrix is non-singular. Note that all elements of an MDS matrix must be unit elements of  Galois ring. We divide proof into three parts. 
	\begin{enumerate}
		\item Letting $f_{i}:~GR(p^s,p^{sm})\rightarrow GR(p^s,~p^{sm})$ defined as $$a_{0}+a_{1}\xi +\cdots + a_{m-1}\xi^{m-1}\mapsto a_{0}+a_{1}\xi^{p^i}+\cdots + a_{m-1}\xi^{(m-1)p^i}.$$
		Then, we have $\det(A')=f_{i}(\det(A))$. If $\det(A)\in U(GR(p^s,p^{sm}))$, then $f_{i}(\det(A))\in U(GR(p^s,~p^{sm}))$, since $f_{i}$ is an automorphism.
		\item Let $g_{c}:GR(p^s,~p^{sm})\rightarrow GR(p^s,~p^{sm})$ defined as $g_{c}(x)=c\cdot x$, where $c\in U(GR(p^s,~p^{sm}))$. Then, $\det(A)'=c\cdot \det(A)$. Since $c\cdot\det(A)\in U(GR(p^s,~p^{sm}))$, so we conclude that 
		$\det(A)\in U(GR(p^s,~p^{sm})).$
		\item Now let $f_{i,c}:GR(p^s,~p^{sm})\rightarrow GR(p^s,~p^{sm})$ defined as the 
		$$f_{i,c}:b=(a_{0}+a_{1}\xi+a_{2}\xi^2+\cdots+a_{m-1}\xi^{m-1})\mapsto (a_{0}+a_{1}\xi^{p^m}+\cdots+a_{m-1}\xi^{(m-1)p^m})\cdot c$$
		Since,
		\begin{eqnarray*}
			f_{i,c}=(g_{c}o~f_{i})(\beta)&=&g_{c}(f_{i}(\beta))\\
			&=&g_{c}(\xi^{p^i})\\
			&=&c\cdot\xi^{p^i}. 
		\end{eqnarray*}
		Then, $\det(A')=c\cdot f_{i}(\det(A))$. Since $\det(A')\in U(GR(p^s,p^{sm}))$, so we have $~\det(A')\in U(GR(p^s,~p^{sm})).$
	\end{enumerate}
	An automorphism $f_{m-i}:GR(p^s,~p^{sm})\rightarrow GR(p^s,~p^{sm})$ defined as 
	$$f_{m-i}(a_{0}+a_{1}\xi+\cdots+a_{m-1}\xi^{m-1})=a_{0}+a_{1}\xi^{p^{m-i}}+a_{2}\xi^{2p^{m-i}}+\cdots+a_{m-1}\xi^{(m-1)p^{m-i}}.$$
	Since $f_{m-i}(\xi)=\xi^{p^{m-i}}$, we have
	\begin{eqnarray*}
		f_{i}of_{m-i}(\xi)=f_{i}(\xi^{p^{m-i}})&=&(\xi^{p^{m-i}})^{p^i}\\
		&=&\xi.
	\end{eqnarray*}
	This implies that $\det(A)= f_{m-i}(\frac{1}{c}\det(A')).$ 
	In conclusion, we get $\det(A')\in U(GR(p^s,~p^{sm}))~iff~$ $\det\in U(GR(p^s,~p^{sm})).$
\end{proof}
Application of Theorem \ref{t7} yield the following corollaries:
\begin{corollary}\cite[Theorem 1]{33}
	There exist $m \cdot (2^m - 1)$ distinct  bijective functions related to the automorphisms of the form
	$f_{i,c}: \beta\mapsto (\beta^{2^i})\cdot c$, where $\beta$ is any primitive element of $F_{2^m}$, $c \in F^*_{2^m}$ , and $0 \leq i \leq m - 1$. These functions
	preserve the MDS property of a square matrix over the same q-array extension field, that is, we can generate new MDS matrices from a given MDS matrix.
\end{corollary}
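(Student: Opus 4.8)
The plan is to obtain this corollary as the special case $p=2$ and $s=1$ of Theorem \ref{t7}. First I would record the relevant specializations: with $s=1$ the Galois ring $GR(p^s,p^{sm})$ collapses to the finite field $GR(p,p^m)=F_{p^m}$ and its prime subring $GR(p,p)$ becomes $F_p$, so for $p=2$ we have $R'=F_{2^m}$ sitting over $R=F_2$. In this situation the unit group $U(R')$ is precisely the multiplicative group $F^*_{2^m}$, which has order $2^m-1$, and a primitive element $b$ of $R'$ is exactly a primitive element $\beta$ of $F_{2^m}$.

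Next I would identify the functions $f_{i,c}$ of Theorem \ref{t7} with those in the statement. The automorphism $f_i$ fixes the prime subring $F_2$ and sends $\xi\mapsto\xi^{2^i}$; since every element of $F_{2^m}$ can be written as $\sum_{j} a_j\xi^{j}$ with $a_j\in F_2$ and the Frobenius map is additive over $F_2$, applying $f_i$ to $\beta=\sum_j a_j\xi^j$ yields $\sum_j a_j\xi^{j2^i}=\big(\sum_j a_j\xi^j\big)^{2^i}=\beta^{2^i}$. Composing with the scaling $g_c:x\mapsto cx$ for $c\in F^*_{2^m}$ gives exactly $f_{i,c}:\beta\mapsto\beta^{2^i}\cdot c$. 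The MDS-preservation property is then immediate: it is the content of Theorem \ref{t7} read in this special case, so every such $f_{i,c}$ carries an MDS matrix over $F_{2^m}$ to an MDS matrix over $F_{2^m}$.

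The remaining point, and the one that requires a little care since it is where the factor $m$ in the count $m\cdot(2^m-1)$ comes from, is to verify that the $m(2^m-1)$ pairs $(i,c)$ with $0\leq i\leq m-1$ and $c\in F^*_{2^m}$ produce pairwise distinct functions. I would argue as follows: suppose $f_{i,c}=f_{j,d}$ as maps on $F_{2^m}$. Evaluating at $\beta=1$ forces $c=d$, after which $\beta^{2^i}=\beta^{2^j}$ for every $\beta$, i.e. $\beta^{2^i-2^j}=1$ for all nonzero $\beta$, so $(2^m-1)\mid(2^i-2^j)$. For $0\leq i,j\leq m-1$ we have $|2^i-2^j|\leq 2^{m-1}-1<2^m-1$, whence $2^i=2^j$ and $i=j$. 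Thus the $m$ choices of $i$ and the $2^m-1$ choices of $c$ are genuinely independent, giving precisely $m\cdot(2^m-1)$ distinct bijective functions, each preserving the MDS property.

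I expect the main obstacle to be purely bookkeeping: reconciling the count $p^{(s-1)m}(p^m-1)$ appearing in Theorem \ref{t7}, which for $p=2,\,s=1$ equals $2^m-1$ (the number of admissible scalars $c$), with the claimed total $m\cdot(2^m-1)$. The discrepancy is accounted for entirely by the $m$ distinct Frobenius automorphisms indexed by $i$, and the distinctness argument above is what makes this multiplicative factor rigorous.
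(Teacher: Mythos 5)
Your proposal is correct and follows the same route the paper intends: the paper offers no separate proof of this corollary, merely the remark that it is an ``application of Theorem \ref{t7}.'' What you add beyond the paper is genuinely worthwhile. First, your explicit identification $f_i(\beta)=\beta^{2^i}$ via additivity of Frobenius over $F_2$, and the clean distinctness argument (evaluate at $1$ to force $c=d$, then use $(2^m-1)\mid(2^i-2^j)$ together with $|2^i-2^j|\le 2^{m-1}-1$ to force $i=j$), supplies the counting step that the paper never writes down. Second, and more importantly, you have put your finger on a real inconsistency: Theorem \ref{t7} asserts only $p^{(s-1)m}(p^m-1)$ distinct functions, which is exactly $|U(GR(p^s,p^{sm}))|$ and specializes to $2^m-1$ when $p=2$, $s=1$ --- it is missing the factor of $m$ coming from the $m$ Frobenius automorphisms, so the corollary's count $m\cdot(2^m-1)$ does not literally follow from the theorem as stated. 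Your resolution (showing the $m$ choices of $i$ and the $2^m-1$ choices of $c$ are independent) is precisely what is needed to make the corollary rigorous, and the same fix should be propagated back into the statement of Theorem \ref{t7} itself, whose count ought to read $m\cdot p^{(s-1)m}(p^m-1)$.
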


Moreover, if we take $s=1$ in Theorem \ref{t7}, then our result reduces to Galois field of characteristic $p.$
\begin{corollary}
	Let $A$ be a $k \times k$ matrix over the finite field $F_{p^m}$. Let $A'$
	be generated by applying any distinct
	automorphism $f_{i}:b\mapsto b^{p^i}$ to the elements of $A$ with $0 \leq i \leq m - 1$ and $b \in F^*_{p^m}$. Then, the determinant of $A'$
	is equal to 0 if and only if the determinant of $A$ is equal to 0.
\end{corollary}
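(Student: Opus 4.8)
The plan is to read this corollary as the $s=1$ specialization of Theorem \ref{t7}: when $s=1$ the Galois ring $GR(p^s,~p^{sm})$ is the field $F_{p^m}$, and in a field the units are exactly the nonzero elements, so the condition ``$\det \in U(F_{p^m})$'' becomes ``$\det \neq 0$''. Thus the MDS-preservation statement of Theorem \ref{t7} translates directly into the stated equivalence $\det(A')=0 \iff \det(A)=0$. I would first record that $f_i:b\mapsto b^{p^i}$ is the $i$-fold iterate of the Frobenius endomorphism $x\mapsto x^p$ of $F_{p^m}$, hence a field automorphism for each $0\le i\le m-1$; in particular it is a bijective ring homomorphism fixing the prime field $F_p$ pointwise and sending $0$ to $0$.

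The essential step is the compatibility of $f_i$ with the determinant. Writing $A'=f_i(A)$ for the entrywise image, I would expand via the Leibniz formula $\det(A)=\sum_{\pi}\operatorname{sgn}(\pi)\prod_{t}a_{t,\pi(t)}$. Because $f_i$ preserves sums and products and fixes the signs $\pm 1\in F_p$, it commutes with this expression, so $\det(A')=f_i(\det(A))$. This is precisely the identity used in part~(1) of the proof of Theorem \ref{t7}, now read inside $F_{p^m}$ rather than over a general Galois ring.

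Finally, since $f_i$ is a bijection we have $f_i(z)=0$ if and only if $z=0$. Combining this with $\det(A')=f_i(\det(A))$ gives $\det(A')=0$ iff $\det(A)=0$, which is the assertion. I do not anticipate a genuine obstacle here: the only point needing care is the identity $\det(A')=f_i(\det(A))$, and this rests on the elementary fact that a ring homomorphism commutes with the Leibniz determinant formula, together with $f_i$ fixing the integer coefficients $\pm 1$.
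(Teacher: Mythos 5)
Your proposal is correct and follows essentially the same route as the paper, which obtains this corollary by setting $s=1$ in the Galois-ring statement (where units become nonzero elements and zero divisors become $\{0\}$) and relies on the identity $\det(A')=f_i(\det(A))$ for the Frobenius automorphism, exactly as in part (1) of the proof of Theorem \ref{t7} and in Proposition \ref{p1}. Your explicit justification of that identity via the Leibniz expansion is in fact more careful than the paper's own one-line appeal to $f_i$ being an automorphism.
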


\begin{corollary}
	There exist $m \cdot (p^m - 1)$ distinct  bijective functions related to the automorphisms in the form
	of $f_{i,c}: \beta\mapsto (\beta^{p^i})\cdot c$, where $\beta$ is any primitive element of $F_{p^m}$, $c \in F^*_{p^m}$, and $0 \leq i \leq m - 1$. These functions
	preserve the MDS property over the same q-array extension field, i.e., new MDS matrices are
	generated from the existing ones.
\end{corollary}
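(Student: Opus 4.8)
The plan is to obtain this statement as the specialization $s=1$ of Theorem \ref{t7}, while independently verifying that the list of functions has exactly $m(p^m-1)$ members and that each is bijective. When $s=1$, the Galois ring $R'=GR(p,~p^{m})$ is simply the finite field $F_{p^m}$, its unit group is $U(R')=F^*_{p^m}$ of cardinality $p^m-1$, and there are no nonzero nilpotents, so every nonzero scalar is a unit. Under this specialization each map takes the explicit form $f_{i,c}(x)=c\cdot x^{p^i}$, where the automorphism part is the $i$-th power of the Frobenius map $x\mapsto x^p$ and $g_c$ is multiplication by $c$. I would treat each $f_{i,c}$ as a map $F_{p^m}\to F_{p^m}$, exactly as the maps $f_i,~g_c,~f_{i,c}$ were defined in the proof of Theorem \ref{t7}.

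First I would count. There are $m$ admissible exponents $i$ (namely $0\le i\le m-1$, indexing the $m$ distinct Frobenius automorphisms of $F_{p^m}$ over $F_p$) and $p^m-1$ admissible scalars $c\in F^*_{p^m}$, so there are at most $m(p^m-1)$ maps of the prescribed shape. To show this bound is attained, i.e. that the pairs $(i,c)$ parametrize pairwise distinct functions, I would suppose $f_{i,c}=f_{j,d}$. Evaluating at $x=1$ forces $c=d$, and then $x^{p^i}=x^{p^j}$ for every $x\in F^*_{p^m}$. Taking $x$ to be a primitive element and using that $F^*_{p^m}$ is cyclic of order $p^m-1$ gives $p^i\equiv p^j\pmod{p^m-1}$; writing $i\ge j$ without loss of generality and using $\gcd(p,~p^m-1)=1$ yields $(p^m-1)\mid(p^{\,i-j}-1)$, and since $0\le i-j\le m-1$ we have $0\le p^{\,i-j}-1<p^m-1$, whence $i=j$. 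Thus the $m(p^m-1)$ functions are distinct, and each is bijective as a composite of the bijective Frobenius power and multiplication by a unit.

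It then remains to record MDS-preservation, which is immediate from Theorem \ref{t7} at $s=1$ but can also be checked directly. Applying $f_{i,c}$ entrywise to a $k\times k$ matrix $A$ produces $A'=c\cdot f_i(A)$, so that for every $r\times r$ submatrix indexed by a set $S$ one has $\det(A'_S)=c^{\,r}\,f_i(\det(A_S))$, using that the Frobenius power is a ring homomorphism and hence commutes with the determinant. Because $c\ne0$ and $f_i$ is injective with $f_i(0)=0$, the scalar $\det(A'_S)$ vanishes if and only if $\det(A_S)$ vanishes. Running this equivalence over all square submatrices shows that $A'$ is MDS exactly when $A$ is, which is the desired conclusion.

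I would expect the only nonroutine point to be the distinctness count: the MDS-preservation follows at once from the multiplicativity of the determinant together with Theorem \ref{t7}, whereas separating the $m(p^m-1)$ functions genuinely requires the cyclicity of $F^*_{p^m}$ and the arithmetic bound $p^{\,|i-j|}\not\equiv1\pmod{p^m-1}$ for $0<|i-j|<m$.
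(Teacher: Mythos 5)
Your proposal is correct, and its core route is the same as the paper's: the paper offers no separate argument for this corollary, simply asserting that it follows by taking $s=1$ in Theorem \ref{t7}, which is exactly your starting point. What you add beyond the paper is the explicit verification that the $m(p^m-1)$ functions are pairwise distinct (evaluation at $1$ to pin down $c$, then cyclicity of $F^*_{p^m}$ and the bound $p^{\,|i-j|}-1<p^m-1$ to pin down $i$), and this addition is genuinely worthwhile: the paper never carries out such a count, and in fact a literal substitution of $s=1$ into the theorem's stated count $p^{(s-1)m}(p^m-1)$ yields $p^m-1$ rather than $m(p^m-1)$, since that expression only tracks the choices of $c$ and omits the factor of $m$ coming from the Frobenius exponent. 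Your direct computation resolves that discrepancy in favour of the corollary as stated (and consistently with the cited result of Sakalli et al.). The MDS-preservation step via $\det(A'_S)=c^{\,r}f_i(\det(A_S))$ over all square submatrices is the same mechanism the paper uses inside the proof of Theorem \ref{t7}, so no issues there.
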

%
	%
	%
%

\subsection{MDS Isomorphism over Galois ring}

In \cite{33}, Sakalli et al. constructed a new MDS matrix over a finite field of characteristic 2 with the help of isomorphisms. Our work is motivated by the above mentioned study \cite{33} and we use finite rings instead of finite fields. In fact, we utilize isomorphisms within the Galois ring to establish new bijective functions and define new MDS matrices by using these bijective functions. In Proposition \ref{p21} below, we investigate the non-singularity of the matrices using these isomorphisms.

Throughout this subsection, $GR(p^s,~p^{sm})|_{h(x)}$ represents the Galois ring generated by the basic irreducible polynomial $h(x)$.

\begin{prop}\label{p21}
	Let $A$ and $A'$ be two $k\times k$ matrices over the Galois ring $GR(p^s,~p^{sm})$ generated by $(h_1,~\eta_1)$ and $(h_2,~\eta_2)$, respectively, where $h_i(x)\in F_{p^m}[x]$ and $\eta_{i}$ be any element of order $p^m-1$, for $1\leq i \leq 2.$ Consider  isomorphisms defined by
	$$f_{s_u}:\eta_1\mapsto \eta^{s_u}_2,~s_{u}=e\cdot2^i~for~1\leq e\leq 2^m-2,~gcd(e,2^m-1)=1,~h_{1}(\eta^{s_u}_2)=0.$$ Then, determinant of $A'$ is  zero divisor iff determinant of $A$ is zero divisor.
\end{prop}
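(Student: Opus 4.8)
The plan is to reduce the entire statement to a single structural fact: a ring isomorphism carries the set of zero divisors bijectively onto the set of zero divisors (equivalently, units onto units), and it commutes with the determinant because $\det$ is a polynomial in the matrix entries built solely from the ring operations. Since $A'$ is obtained from $A$ by applying $f_{s_u}$ entrywise (exactly as in Proposition \ref{p1}), once $f_{s_u}$ is known to be an isomorphism we immediately obtain $\det(A') = f_{s_u}(\det(A))$, from which the equivalence follows.

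First I would verify that $f_{s_u}$ is a genuinely well-defined ring isomorphism between the two copies of $GR(p^s,~p^{sm})$ determined by $(h_1,\eta_1)$ and $(h_2,\eta_2)$. Each ring is a free $\mathbb{Z}_{p^s}$-module spanned by the powers of a root of its basic irreducible polynomial, so a $\mathbb{Z}_{p^s}$-algebra map is pinned down by the image $\eta_1 \mapsto \eta_2^{s_u}$ of the generator. The hypothesis $h_1(\eta_2^{s_u}) = 0$ guarantees that $\eta_2^{s_u}$ satisfies the defining relation of the source ring, so the assignment extends uniquely to a ring homomorphism by the universal property of $\mathbb{Z}_{p^s}[x]/(h_1)$. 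The coprimality conditions imposed on $s_u$ (here $\gcd(e,\,p^m-1)=1$, together with the power factor being coprime to $p^m-1$) guarantee that $\eta_2^{s_u}$ again has order $p^m-1$, hence is a primitive Teichm\"uller generator and in particular a unit; therefore its image generates the full target ring, so $f_{s_u}$ is surjective. As both rings are finite of the same cardinality $p^{sm}$, surjectivity forces injectivity, and $f_{s_u}$ is an isomorphism.

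With $f_{s_u}$ established as an isomorphism, the remaining steps are short. Expanding $\det$ by the Leibniz formula and using that $f_{s_u}$ respects sums and products yields $\det(A') = f_{s_u}(\det A)$. Because an isomorphism maps $\mathcal{N}(GR(p^s,~p^{sm}))$ bijectively onto itself and $U(GR(p^s,~p^{sm}))$ bijectively onto itself, the element $f_{s_u}(\det A)$ is a zero divisor precisely when $\det A$ is. This gives the stated equivalence, in complete analogy with Proposition \ref{p1}; applying the inverse isomorphism (which is again of the same type) supplies the converse direction with no extra work.

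I expect the only real obstacle to lie in the first step: confirming that the arithmetic side conditions on $s_u$ and the root condition $h_1(\eta_2^{s_u})=0$ together promote the additive assignment $\eta_1 \mapsto \eta_2^{s_u}$ to a \emph{bijective} ring map. Once that is secured, the zero-divisor/unit dichotomy transfers mechanically and the proposition follows immediately.
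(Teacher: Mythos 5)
Your proposal is correct and follows essentially the same route as the paper: both arguments reduce the statement to showing that $f_{s_u}$ is a ring isomorphism (the paper by composing the maps $\phi_2\circ\phi\circ\phi_1^{-1}$ obtained from the structure theorem for Galois rings, you by the universal property of $\mathbb{Z}_{p^s}[x]/(h_1(x))$ together with the root condition $h_1(\eta_2^{s_u})=0$ and the coprimality of $s_u$ with $p^m-1$), after which the determinant transfers entrywise. If anything your write-up is the more complete one, since the paper's proof stops once the isomorphism is verified and leaves the identity $\det(A')=f_{s_u}(\det A)$ and the preservation of units and zero divisors implicit, citing the analogy with Proposition \ref{p1}.
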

\begin{proof}
	Let $R = GR(p^s,~p^{sm})$ be a Galois ring of order $p^{sm}$ with characteristic $p^s$ and let $\eta_1,~\eta_2 \in GR(p^s,~p^{sm})$ be roots of the basic irreducible polynomials $h_{1}(x)$ and $h_2(x)$ of degree $m$ over $\mathbb{Z}_{p^s}$, respectively. Then, by \cite[Theorem 14.30]{11}, we have
	
	\begin{eqnarray}\label{eqn12}
		\phi_{1}&:&~\frac{\mathbb{Z}_{p^s}[x]}{(h_1(x))}\rightarrow \mathbb{Z}_{p^s}[\eta_1],\label{eqn51}\\
		\phi_{2}&:&~\frac{\mathbb{Z}_{p^s}[x]}{(h_{2}(x))}\rightarrow \mathbb{Z}_{p^s}[\eta^{s_u}_2],\label{eqn52}\\
		\phi &:&~\frac{Z_{p^s}[x]}{(h_{1}(x))}\rightarrow \frac{Z_{p^s}[x]}{(h_{2}(x))}\label{eqn53}.
	\end{eqnarray}
	Define a map $f_{s_u}$ with the help of Equations (\ref{eqn51}), (\ref{eqn52}),  (\ref{eqn53}):\\
	$$f_{s_u}=\phi_{2}o\phi o\phi_1^{-1}:\mathbb{Z}_{p^s}[\eta_1]\rightarrow \mathbb{Z}_{p^s}[\eta^{s_u}_2],$$
	such that $$f_{s_u}(a_{0}+a_{1}\eta_1+a_{2}\eta^2_2+\cdots+a_{m-1}\eta^{m-1}_1)=a_{0}+a_1\eta^{s_u}_2+\cdots+a_{m-1}\eta^{{s_u}(m-1)}_2$$
	$f_{s_u}(\eta_1)=\eta^{s_u}_2.$ Hence, $f_{s_u}$ is an isomorphism, because of $gcd(e,~p^m-1)=1~and~h_1(\eta^{s_u}_2)=0,~where~s_u=e\cdot p^i,~1\leq e \leq p^m-2.$
\end{proof}

\begin{theorem}\label{t44}
	There exist $ p^{(s-1)m}(p^{m}-1)$ distinct bijective functions obtained by using isomorphism in the form of $f_{s,u}:\eta_1 \mapsto \eta^{s_u}_2\cdot c$, where $\eta_1$ and $\eta_2$ are the root of the polynomials $h_{1}(x)$ and $h_{2}(x)$, respectively, $c\in U(GR(p^s,p^{sm})),~s_{u}=e\cdot 2^i,~1\leq e\leq 2^m-2,~gcd(e,~2^m-1)=1,~p_{1}(\beta^{s_u}_2)=0,$ and $0\leq u,i\leq m-1.$ 
	
\end{theorem}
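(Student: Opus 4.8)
The plan is to realize each \(f_{s,u}\) as a composition of two bijections and then enumerate the admissible parameters. By Proposition \ref{p21}, for an admissible exponent \(s_u = e\,p^{i}\) with \(\gcd(e,p^m-1)=1\) and \(h_1(\eta_2^{s_u})=0\), the assignment \(f_{s_u}:\eta_1\mapsto \eta_2^{s_u}\) extends to a ring isomorphism \(GR(p^s,p^{sm})|_{h_1}\to GR(p^s,p^{sm})|_{h_2}\); this is exactly what the maps \(\phi_1,\phi,\phi_2\) of Proposition \ref{p21} produce via \cite[Theorem 14.30]{11}. I would then compose \(f_{s_u}\) with the scaling map \(g_c:x\mapsto c\,x\), which is a bijection of \(GR(p^s,p^{sm})\) whenever \(c\in U(GR(p^s,p^{sm}))\) since \(c\) is invertible. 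The composite \(f_{s,u}=g_c\circ f_{s_u}\) sends \(\eta_1\mapsto \eta_2^{s_u}\cdot c\) and is bijective as a composition of bijections, establishing the first (structural) half of the statement.

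Next I would record the MDS‑preservation content, proceeding exactly as in the proof of Theorem \ref{t7}. For a \(k\times k\) matrix \(A\) over \(GR(p^s,p^{sm})|_{h_1}\), applying \(f_{s,u}\) entrywise to obtain \(A'\) gives \(\det(A')=c^{\,k}\,f_{s_u}(\det A)\), since \(f_{s_u}\) is a ring homomorphism and each of the \(k\) rows contributes one factor \(c\). Because \(f_{s_u}\) is an isomorphism it carries units to units and zero divisors to zero divisors, and \(c^{k}\) is a unit; hence \(\det(A')\in U(GR(p^s,p^{sm}))\) if and only if \(\det(A)\in U(GR(p^s,p^{sm}))\). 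Applying this to every square submatrix and invoking Remark \ref{r1} together with Theorem \ref{t5} shows that \(A'\) is MDS precisely when \(A\) is, so each \(f_{s,u}\) preserves the MDS property.

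For the count itself, I would argue as follows. The constraint \(h_1(\eta_2^{s_u})=0\) forces \(\eta_2^{s_u}\) to lie among the roots of the degree‑\(m\) basic irreducible polynomial \(h_1\); these roots form a single Frobenius orbit \(\{\eta_2^{e},\eta_2^{ep},\dots,\eta_2^{ep^{m-1}}\}\) inside the Teichmüller set of \(\langle\eta_2\rangle\), and the condition \(\gcd(e,p^m-1)=1\) guarantees that \(f_{s_u}\) really is a ring isomorphism (its image generates the whole Teichmüller set). The scaling unit \(c\) ranges over \(U(GR(p^s,p^{sm}))\), which has cardinality \(p^{sm}-p^{(s-1)m}=p^{(s-1)m}(p^m-1)\). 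Distinctness of the resulting functions is verified by evaluation: \(f_{s,u}(1)=c\) recovers the multiplier (since \(f_{s_u}(1)=1\)), after which \(f_{s,u}(\eta_1)=\eta_2^{s_u}c\) recovers \(\eta_2^{s_u}\) and hence \(s_u\). Thus the distinct functions are indexed by the admissible pairs \((s_u,c)\), and the essential degree of freedom — matching the asserted total \(p^{(s-1)m}(p^m-1)\) — is exactly the choice of scaling unit \(c\), in direct parallel with the \(F^{*}_{2^m}\) factor of the finite‑field result \cite{33} that this theorem generalizes.

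The main obstacle I anticipate is precisely this enumeration/distinctness step rather than the structural or determinant computations. One must use the two side conditions in a genuinely complementary way: \(h_1(\eta_2^{s_u})=0\) is what makes \(\eta_1\mapsto\eta_2^{s_u}\) extend to a homomorphism at all, while \(\gcd(e,p^m-1)=1\) is what makes that homomorphism bijective, and only after fixing this does the free parameter \(c\in U(GR(p^s,p^{sm}))\) produce the claimed family of size \(p^{(s-1)m}(p^m-1)\). Care is needed to confirm that no two admissible labelings collapse to the same function and that every targeted unit image is actually attained; getting this bookkeeping exactly right is the delicate part of the argument.
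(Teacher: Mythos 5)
Your structural and MDS-preservation arguments follow essentially the route the paper intends: its entire proof of this theorem is the single line ``Proof is similar to that of Theorem \ref{t7},'' and Theorem \ref{t7} is proved exactly by the decomposition you use, namely writing the map as a ring isomorphism (here supplied by Proposition \ref{p21} via $\phi_2\circ\phi\circ\phi_1^{-1}$) composed with the unit scaling $g_c$, then tracking the determinant of every square submatrix. You are in fact more careful than the paper on one point: applying $g_c$ entrywise to a $k\times k$ matrix multiplies the determinant by $c^k$, whereas the paper's Theorem \ref{t7} proof writes $c\cdot\det(A)$; the discrepancy is harmless since $c^k$ is still a unit, but your version is the correct one.

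The one place your argument does not close is the enumeration, and you half-acknowledge this yourself. You correctly observe that the distinct functions are indexed by the admissible pairs $(s_u,c)$: evaluation at $1$ recovers $c$ and evaluation at $\eta_1$ then recovers $\eta_2^{s_u}$. But the roots of $h_1$ form a Frobenius orbit of size $m$, so there are $m$ admissible values of $s_u$, and your own indexing therefore yields $m\cdot p^{(s-1)m}(p^m-1)$ distinct functions, not the $p^{(s-1)m}(p^m-1)$ asserted in the statement. Your closing claim that ``the essential degree of freedom \ldots is exactly the choice of scaling unit $c$'' contradicts the pair-indexing you just established, and you cannot have both. To be fair, this inconsistency is latent in the paper itself: the finite-field corollary it derives from \cite[Theorem 4]{33} counts $m\cdot(2^m-1)$ functions, which carries the factor of $m$ that the Galois-ring statement drops, and the paper's one-line proof never confronts the count at all. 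So the gap is real but inherited; if you present this argument you should either prove the larger count $m\cdot p^{(s-1)m}(p^m-1)$ or explain why distinct choices of $s_u$ are being identified, rather than asserting the smaller number.
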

\begin{proof}
	Proof is similar to that of Theorem \ref{t7}.
\end{proof}

\begin{remark}
	These function can be used in generating new MDS matrices over $GR(p^s,p^{sm})|_{h_2(x)}$ from an MDS matrix over $GR(p^s,p^{sm})|_{h_1(x)}$, which preserves the MDS property of a square matrix.
\end{remark}
Application of Theorem \ref{t44} yield the following corollaries:
\begin{corollary}\cite[Theorem 4]{33}
	There exist $m \cdot (2^m - 1)$ distinct functions obtained by using isomorphisms in the form of $f_{s_u}^c: \beta_1 \mapsto (\beta_{2}^{s_u}) \cdot c$ where $\beta_1$ and $\beta_2$ are  any primitive element of $F_{2^m}/p_1(x)$ and $F_{2^m}/p_2(x)$, respectively, $c \in F^*_{2^m}$, $s_u = e \cdot 2^i$ for $1 \leq e \leq 2^m - 2$, $\gcd(e, p^m - 1) = 1$, $p_1(\beta_{2}^{s_u}) = 0$, and $0 \leq u, i \leq m - 1$. These functions can be used in generating new MDS matrices over $F_{2^m}/p_2(x)$ from an MDS matrix over $F_{2^m}/p_1(x)$ which preserve the MDS property of a square matrix.
\end{corollary}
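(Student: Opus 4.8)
The plan is to obtain this statement as the specialization of Theorem \ref{t44} to the finite-field case, by setting $s=1$ and $p=2$. When $s=1$ the Galois ring $GR(p^s,~p^{sm})=GR(p,~p^{m})$ collapses to its residue field $F_{p^m}=\mathbb{Z}_p[x]/(\mu(h(x)))$, so that the two generated rings $GR(p^s,~p^{sm})|_{h_1(x)}$ and $GR(p^s,~p^{sm})|_{h_2(x)}$ become the field representations $F_{2^m}/p_1(x)$ and $F_{2^m}/p_2(x)$, while $U(GR(p^s,~p^{sm}))$ becomes $F_{2^m}^{*}$. First I would record that, under this specialization, every unit hypothesis in Theorem \ref{t44} becomes the usual nonzero-element hypothesis of a field and the notion of zero divisor collapses to $0$. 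This turns the conclusion of Proposition \ref{p21}, namely that $\det(A')$ is a zero divisor iff $\det(A)$ is, into the non-singularity statement $\det(A')=0$ iff $\det(A)=0$, which is exactly what is needed to carry the MDS property across.

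Next I would carry out the counting that produces the number $m\cdot(2^m-1)$. A primitive element $\beta_2$ of $F_{2^m}/p_2(x)$ has order $2^m-1$, and the roots of the irreducible polynomial $p_1(x)$ among its powers are precisely the $m$ Frobenius conjugates $\beta_2^{e},~\beta_2^{2e},~\beta_2^{4e},~\dots,~\beta_2^{2^{m-1}e}$, where $e$ is the exponent (taken modulo $2^m-1$) with $\gcd(e,~2^m-1)=1$ and $p_1(\beta_2^{e})=0$. Hence the admissible exponents $s_u=e\cdot 2^{i}$ with $0\le i\le m-1$ yield exactly $m$ distinct isomorphisms $f_{s_u}$, and composing each with a scaling $g_c(x)=c\cdot x$ for $c\in F_{2^m}^{*}$ (there are $2^m-1$ such scalars) produces the family $f_{s_u}^{c}:\beta_1\mapsto(\beta_2^{s_u})\cdot c$. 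Multiplying the two counts gives $m\cdot(2^m-1)$ maps, and I would confirm they are pairwise distinct by comparing the images of $\beta_1$ and of $1$.

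Finally I would argue the MDS-preservation exactly as in the proof of Theorem \ref{t7} that is invoked by Theorem \ref{t44}: applying $f_{s_u}^{c}$ entrywise to any $k\times k$ submatrix multiplies its determinant by $c^{k}$ after transport through the field isomorphism $f_{s_u}$, so that $\det(A')=c^{k}\,f_{s_u}(\det(A))$; since $f_{s_u}$ is a bijection fixing $0$ and $c^{k}\neq 0$, we obtain $\det(A')=0$ iff $\det(A)=0$. As this equivalence holds for every square submatrix, $A'$ is MDS iff $A$ is, with $A'$ living over $F_{2^m}/p_2(x)$ and $A$ over $F_{2^m}/p_1(x)$. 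The main obstacle I anticipate is the combinatorial bookkeeping rather than the algebra: one must verify that the root set of $p_1(x)$ contributes exactly the factor $m$, that is, that the $m$ conjugate exponents $e\cdot 2^{i}$ are distinct modulo $2^m-1$ and exhaust the roots of $p_1$, so that the final count is genuinely $m\cdot(2^m-1)$ and not merely the bare scalar count $2^m-1$.
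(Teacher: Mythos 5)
Your proposal is correct and follows essentially the same route as the paper, which gives no separate argument for this corollary but presents it as the $s=1$, $p=2$ specialization of Theorem \ref{t44}, whose proof is in turn the determinant-transport argument of Theorem \ref{t7} (entrywise application of the isomorphism composed with scaling by $c$, so $\det(A')=c^{k}f_{s_u}(\det(A))$ for every square submatrix). One point worth retaining from your write-up: your explicit count of the $m$ conjugate exponents $s_u=e\cdot 2^{i}$ is a genuine addition rather than redundant bookkeeping, because the formula $p^{(s-1)m}(p^{m}-1)$ stated in Theorem \ref{t44} evaluates to $2^{m}-1$, not $m\cdot(2^{m}-1)$, at $s=1$, $p=2$, so the factor $m$ claimed by the corollary is only justified by the Frobenius-conjugate argument you supply (that $e\cdot 2^{i}$, $0\leq i\leq m-1$, are distinct modulo $2^{m}-1$ and exhaust the roots of $p_1$).
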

Now, if we take $s=1$ in Theorem \ref{t44}, then our result reduces to Galois field of characteristic $p.$
\begin{corollary}
	Let $A$ be a $k \times k$ matrix over the finite field $F_{p^m}/p_1(x)$ and $\beta_1$ be any primitive element of $F_{p^m}/p_1(x)$. Let $A'$ be a $k \times k$ matrix over the finite field $F_{p^m}/p_2(x)$ generated by applying the isomorphism $f_{s_u}: \beta_1 \mapsto \beta_{p}^{s_u}$ to the elements of $A$ (which can also be represented as $\beta_1^d$ for $0 \leq d \leq p^m - 2)$ where $\beta_2$ is any primitive element of $F_{p^m}/p_2(x)$, $s_u = e \cdot p^i$ for $1 \leq e \leq p^m - 2$, $\gcd(e, p^m - 1) = 1$, $p_1(\beta_{p}^{s_u}) = 0$, and $0 \leq u, i \leq m - 1$. Then $\det(A')=0$ iff $\det(A)=0.$
\end{corollary}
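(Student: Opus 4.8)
The plan is to obtain this statement as the specialization of Proposition \ref{p21} to the case $s=1$, in which the Galois ring $GR(p,~p^{m})$ collapses to the finite field $F_{p^m}$. In a field the only zero divisor is $0$ itself, so the dichotomy ``determinant is a zero divisor versus a unit'' of Proposition \ref{p21} sharpens to ``determinant is $0$ versus nonzero.'' Thus the assertion $\det(A')=0 \iff \det(A)=0$ is exactly what that proposition yields once $s$ is set to $1$, in line with the $s=1$ reduction of Theorem \ref{t44} announced just before the statement.

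First I would confirm that $f_{s_u}$ is a genuine field isomorphism from $F_{p^m}/p_1(x)$ onto $F_{p^m}/p_2(x)$. The hypotheses $\gcd(e,p^m-1)=1$ and $s_u=e\cdot p^i$ force $\gcd(s_u,p^m-1)=1$, since $p^i$ is coprime to $p^m-1$; hence $\beta_2^{s_u}$ again has order $p^m-1$ and is primitive in $F_{p^m}/p_2(x)$. The hypothesis $p_1(\beta_2^{s_u})=0$ then says that $\beta_2^{s_u}$ is a root of the very polynomial $p_1(x)$ of which the primitive element $\beta_1$ is a root, so $p_1(x)$ is the common minimal polynomial of both; because $F_{p^m}/p_1(x)$ and $F_{p^m}/p_2(x)$ are each splitting fields of $p_1(x)$ over $F_p$, the assignment $\beta_1\mapsto\beta_2^{s_u}$ extends uniquely to a field isomorphism, exactly as in the proof of Proposition \ref{p21}. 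This is the step carrying all the content and the only place the number-theoretic conditions on $e$, $p^i$, and the root condition are used.

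Next I would use that a ring homomorphism commutes with the determinant. Since $\det$ is a fixed polynomial (an alternating sum of products) in the $k^2$ matrix entries and $A'$ is obtained by applying $f_{s_u}$ entrywise to $A$, one gets $\det(A')=f_{s_u}(\det(A))$. Finally, as $f_{s_u}$ is a bijection with $f_{s_u}(0)=0$, it is injective, so $f_{s_u}(\det(A))=0$ if and only if $\det(A)=0$. Combining the two relations gives $\det(A')=0 \iff \det(A)=0$, which is the claim; appealing to the MDS characterization via non-singular submatrices then records the consequence that $A'$ is MDS precisely when $A$ is.

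I expect the only real obstacle to be the well-definedness of $f_{s_u}$, namely confirming that the root condition $p_1(\beta_2^{s_u})=0$ is what promotes $\beta_1\mapsto\beta_2^{s_u}$ from a mere set map to an isomorphism; everything afterward is the standard fact that homomorphisms preserve determinants together with injectivity. Because this well-definedness is already settled inside Proposition \ref{p21}, the corollary reduces to invoking that proposition at $s=1$ and observing that the zero divisors of a field are exactly $\{0\}$.
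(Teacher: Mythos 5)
Your proposal is correct and follows essentially the same route as the paper: the paper's own proof is a brief sketch that reduces the corollary to the earlier automorphism/isomorphism propositions, relying on $f_{s_u}$ being a well-defined isomorphism fixing the prime field so that $\det(A')=f_{s_u}(\det(A))$ and injectivity forces $\det(A')=0$ iff $\det(A)=0$. You simply supply the details (primitivity of $\beta_2^{s_u}$ from $\gcd(s_u,p^m-1)=1$ and the root condition $p_1(\beta_2^{s_u})=0$) that the paper leaves implicit.
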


\begin{proof}
	The proof is similar to Proposition \ref{p1}, since we have the same mapping up to the isomorphism and all entries of an MDS matrix remain nonzero after applying the isomorphism. Note that, each $f_{s_u}$ maps each element in $F_2$ to itself. The isomorphism $f_{s_u}$ is related to automorphism as defined in Proposition \ref{p1} due to the structure of $s_u$.
\end{proof}

\begin{corollary}
	There exist $m \cdot (p^m - 1)$ distinct functions obtained by using isomorphisms in the form of $f_{s_u}^c: \beta_1 \mapsto (\beta_{p}^{s_u}) \cdot c$ where $\beta_1$ and $\beta_2$ are respectively any primitive element of $F_{p^m}/p_1(x)$ and $F_{p^m}/p_2(x)$, $c \in F^*_{2^m}$, $s_u = e \cdot p^i$ for $1 \leq e \leq p^m - 2$, $\gcd(e, p^m - 1) = 1$, $p_1(\beta_{p}^{s_u}) = 0$, and $0 \leq u, i \leq m - 1$. These functions can be used in generating new MDS matrices over $F_{p^m}/p_2(x)$ from an MDS matrix over $F_{p^m}/p_1(x)$ which preserve the MDS property of a square matrix.
\end{corollary}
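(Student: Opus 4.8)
The plan is to obtain this statement as the characteristic-$p$ field specialization of Theorem \ref{t44}, taking $s=1$ so that the Galois ring $GR(p^s,p^{sm})$ collapses to the finite field $F_{p^m}$ and the unit group $U(GR(p^s,p^{sm}))$ becomes $F_{p^m}^{*}$. With $s=1$, each function $f_{s_u}^{c}$ factors as the composite $g_{c}\circ f_{s_u}$, where $f_{s_u}\colon F_{p^m}/p_1(x)\to F_{p^m}/p_2(x)$ is the isomorphism sending $\beta_1\mapsto\beta_2^{s_u}$ furnished by the composite $\phi_{2}\circ\phi\circ\phi_{1}^{-1}$ in the proof of Proposition \ref{p21}, and $g_c$ is multiplication by the nonzero scalar $c$. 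First I would record that $f_{s_u}$ is a genuine ring isomorphism: this is exactly Proposition \ref{p21} at $s=1$, where the hypotheses $\gcd(e,p^m-1)=1$ and $p_1(\beta_2^{s_u})=0$ guarantee that $\beta_2^{s_u}$ is again a primitive element and a root of $p_1$.

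Next I would carry out the counting to reach the number $m\cdot(p^m-1)$. Since $p_1(x)$ is irreducible of degree $m$, its roots inside $F_{p^m}$ are precisely the $m$ Frobenius conjugates $\beta_2^{s_u},\beta_2^{s_u p},\dots,\beta_2^{s_u p^{m-1}}$; equivalently, for a fixed admissible $e$ the exponent $s_u=e\cdot p^{i}$ realizes $m$ distinct residues as $i$ runs through $0\le i\le m-1$, and each choice produces a distinct isomorphism $f_{s_u}$. Composing each of these with multiplication by one of the $|F_{p^m}^{*}|=p^m-1$ scalars $c$ yields $m\cdot(p^m-1)$ maps in total. Distinctness follows because a coincidence $g_{c}\circ f_{s_u}=g_{c'}\circ f_{s_{u'}}$ would force $f_{s_{u'}}=g_{c/c'}\circ f_{s_u}$ to be a ring homomorphism, which is possible only when $c/c'=1$, and then $f_{s_{u'}}=f_{s_u}$ forces $i=i'$. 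This is the step I expect to require the most care, since the factor $m$ arises from the Frobenius orbit of roots of $p_1$ rather than from the unit count alone, and one must rule out any collapse of two pairs $(i,c)\neq(i',c')$ onto the same function.

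Finally I would verify the MDS-preservation claim. Each $f_{s_u}^{c}$ is a bijection of $F_{p^m}$ carrying nonzero elements to nonzero elements, so applying it entrywise to an MDS matrix $A$ over $F_{p^m}/p_1(x)$ produces a matrix $A'$ over $F_{p^m}/p_2(x)$ whose entries stay nonzero. For any square submatrix $B$ of $A$ of order $r$, with image $B'$, the homomorphism property of $f_{s_u}$ together with the entrywise scaling by $c$ gives $\det(B')=c^{\,r}\,f_{s_u}(\det(B))$. Because $f_{s_u}$ is an isomorphism and $c\neq 0$, we get $\det(B')=0$ if and only if $\det(B)=0$; hence $B$ nonsingular forces $B'$ nonsingular, and since this holds for every square submatrix, $A'$ is MDS whenever $A$ is. The argument reverses through the inverse isomorphism, exactly as in the three-step scheme of the proof of Theorem \ref{t7}. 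The only genuine obstacle is the bookkeeping in the counting step; the non-singularity transfer is immediate once one observes that a nonzero scalar power $c^{\,r}$ never annihilates a determinant.
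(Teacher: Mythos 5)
Your proposal is correct and follows essentially the same route the paper intends: the paper's own ``proof'' merely remarks that the roots of $p_1$ form a single Frobenius-conjugacy class with a common minimal polynomial and then defers to the argument of Theorem \ref{t7}, which is exactly the decomposition $f_{s_u}^{c}=g_c\circ f_{s_u}$, the count $m\cdot(p^m-1)$, and the submatrix-determinant transfer that you spell out. You have simply supplied the details the paper omits (and your $\det(B')=c^{\,r}f_{s_u}(\det(B))$ is the correct form of the scaling the paper states loosely).
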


\begin{proof}
	Let $\beta \in F_{p^m}$ be a primitive element. Recall that the minimal polynomial of the set $\beta, \beta^2, \ldots, \beta^{p^m-1}$ where $m$ is the smallest integer such that $\beta^{p^m} = \beta$ is the same. Since the proof is similar to Theorem \ref{t7}, we omit it.
\end{proof}
In the following results, we prove that involutory property of matrix is preserve under ring automorphism:
\begin{theorem}
	Let $A=(a_{ij}) $ be an involutory matrix of order $n$ over $GR(p^s,p^{sm})$ and $\phi: GR(p^s,p^{sm}) \rightarrow GR(p^s,p^{sm})$ be any automorphism. Then, $A'=(\phi(a_{ij}))$ is  an involutory matrix.
\end{theorem}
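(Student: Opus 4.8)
The plan is to exploit the fact that applying a ring automorphism entry-by-entry to square matrices is itself a ring homomorphism on the matrix ring $M_n(GR(p^s,p^{sm}))$, and then transport the relation $A^2=I$ through this homomorphism. First I would record the two structural facts about $\phi$ that I will actually use: since $\phi$ is a ring automorphism it is additive and multiplicative, and it fixes the identity and the zero, i.e. $\phi(1)=1$ and $\phi(0)=0$. These are the only properties needed; I do not need injectivity or surjectivity for the core argument, only that $\phi$ is a ring homomorphism fixing $0$ and $1$.

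The heart of the proof is a single entrywise computation. Writing $A'=(\phi(a_{ij}))$, the $(i,j)$ entry of $A'^2$ is
\[
(A'^2)_{ij}=\sum_{k=1}^{n}\phi(a_{ik})\,\phi(a_{kj}).
\]
Because $\phi$ is multiplicative, each summand equals $\phi(a_{ik}a_{kj})$, and because $\phi$ is additive the whole sum collapses to
\[
(A'^2)_{ij}=\phi\!\left(\sum_{k=1}^{n}a_{ik}a_{kj}\right)=\phi\big((A^2)_{ij}\big).
\]
In other words, the entrywise map commutes with matrix multiplication, so $A'^2=\phi(A^2)$ in the sense that each entry of $A'^2$ is the image under $\phi$ of the corresponding entry of $A^2$.

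Next I would invoke the hypothesis $A^2=I$. This means $(A^2)_{ij}=\delta_{ij}$, the Kronecker delta, which is $1$ when $i=j$ and $0$ otherwise. Applying $\phi$ and using $\phi(1)=1$, $\phi(0)=0$ gives $\phi((A^2)_{ij})=\delta_{ij}$, so $(A'^2)_{ij}=\delta_{ij}$ for all $i,j$; that is, $A'^2=I$, and hence $A'$ is involutory.

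There is no genuine obstacle here: the argument is routine once one observes that the entrywise automorphism is a matrix-ring homomorphism and that any ring automorphism fixes $0$ and $1$. The only point deserving a line of care is the interchange of $\phi$ with the finite sum and product defining matrix multiplication, which is exactly where additivity and multiplicativity of $\phi$ are used; everything else is formal.
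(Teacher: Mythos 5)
Your proof is correct and follows essentially the same route as the paper: both apply $\phi$ to the identity $\sum_{k}a_{ik}a_{kj}=\delta_{ij}$ and use additivity, multiplicativity, and $\phi(0)=0$, $\phi(1)=1$ to conclude $(A'^2)_{ij}=\delta_{ij}$. Your write-up is if anything slightly cleaner, since it makes explicit that the entrywise map is a homomorphism of the matrix ring (and, incidentally, you state the Kronecker delta correctly, whereas the paper's displayed definition of $\delta_{ij}$ has the two cases swapped).
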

\begin{proof}
	Since $A$ is an involutory matrix over $GR(p^s,p^{sm})$, so we have 
	
	\begin{equation}\label{eqn3}
		\sum_{k=1}^{n}a_{ik}a_{kj}=\delta_{ij};~~1\leq i,~j\leq n,~~where
	\end{equation}
	$$\delta_{ij}= 
	\begin{cases}
		1, & i\neq j\\
		0, & i=j\\
	\end{cases}.$$
	Applying $\phi$ on Equation (\ref{eqn3}), we get \\
	$$	\phi\Bigg(\sum_{k=1}^{n}a_{ik}a_{kj}\Bigg)=\sum_{k=1}^{n}\phi(a_{ik})\phi (a_{kj})=\delta_{ij},~~1\leq i,~j\leq n.$$
	Hence, $A'=(\phi(a_{ij}))$ is  an involutory matrix.
\end{proof}
\begin{corollary}
	Let $A=(a_{ij}) $ be an involutory MDS matrix of order $n$ over $GR(p^s,p^{sm})$ and $\phi: GR(p^s,p^{sm}) \rightarrow GR(p^s,p^{sm})$ be any automorphism. Then, $A'=(\phi(a_{ij}))$ is an involutory MDS matrix.
\end{corollary}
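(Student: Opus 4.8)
The plan is to show separately that $A'$ inherits the involutory property and the MDS property from $A$, and then simply combine the two conclusions. Both ingredients are already available in the material above, so the corollary is essentially a matter of recording the intersection of the two preservation results; no new machinery is needed.

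First I would handle the involutory property. Writing the hypothesis $A^2 = I$ entrywise as $\sum_{k=1}^{n} a_{ik}a_{kj} = \delta_{ij}$ for all $1 \leq i,j \leq n$, I apply the automorphism $\phi$ to both sides. Because $\phi$ is a ring homomorphism it commutes with finite sums and products, giving $\sum_{k=1}^{n} \phi(a_{ik})\phi(a_{kj}) = \phi(\delta_{ij})$. Since $\phi$ fixes $0$ and $1$ we have $\phi(\delta_{ij}) = \delta_{ij}$, so $(A')^2 = I$. This is precisely the content of the immediately preceding theorem on preservation of the involutory property under ring automorphisms, which I would cite directly rather than reprove.

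Next I would establish the MDS property. Let $B$ be any square submatrix of $A$ and $B'$ the corresponding submatrix of $A'$, obtained by applying $\phi$ entrywise to $B$. Since $\det$ is a polynomial expression in the entries and $\phi$ is a ring homomorphism, $\det(B') = \phi(\det(B))$. As $A$ is MDS, $\det(B)$ is a unit, and an automorphism of $GR(p^s,p^{sm})$ restricts to a bijection of the unit group, so $\det(B')$ is again a unit and $B'$ is non-singular. Hence every square submatrix of $A'$ is non-singular, which is exactly the MDS-preservation conclusion recorded in Proposition \ref{p1} and Theorem \ref{t7}.

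Finally, combining the two steps, $A'$ is simultaneously involutory and MDS, hence an involutory MDS matrix, as claimed. I do not anticipate any genuine obstacle: the only points requiring care are the observation that a ring automorphism fixes the Kronecker-delta values $0$ and $1$ (so the relation $A^2=I$ is preserved verbatim) and that it maps units to units (so non-singularity of every submatrix is preserved). Both are immediate from the fact that $\phi$ is a ring automorphism of $GR(p^s,p^{sm})$, and so the proof reduces to invoking the two previous results in tandem.
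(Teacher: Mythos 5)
Your proposal is correct and follows exactly the route the paper intends: the corollary is stated as an immediate consequence of the preceding theorem (involutory property preserved entrywise by $\phi$) combined with the MDS-preservation facts of Proposition \ref{p1} and Theorem \ref{t7} (an automorphism sends the unit determinant of every square submatrix to a unit). Nothing is missing; your two observations — that $\phi$ fixes $0$ and $1$, and that it maps units to units — are precisely the points the paper relies on.
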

%
\section{\textbf{The examples}}



			In this section, we present some examples of Cauchy MDS matrices over Galois rings with characteristics both even and odd. Additionally, we provide some  examples in which number of  entries in the matrix is reduced. Furthermore, we illustrate the construction of new MDS matrices by utilizing a given Cauchy matrix in combination with the Frobenius automorphisms.
			\par We construct examples of Cauchy MDS matrices over the Galois ring $GR(2^2,2^8)$ of order 7 by using Theorem \ref{t6} and Theorem \ref{t8}, respectively.
			\begin{example}\label{exa1}
				Let $GR(2^2,2^8)=\frac{\mathbb{Z}_{4}[x]}{(x^4+2x^2+3x+1)}$ be Galois ring of characteristic 4 with cardinality 256 and $\xi=x+(x^4+2x^2+3x+1)$. Define $\tau= \{0,1,\xi,\xi^2,\dots, \xi^{15}\}$, where $o(\xi)=15$.\\
				For	$x_{i}=\xi^{i}$ and $y_{j}=\xi^{j+7}$, where $0\leq i,j \leq 6$, define Cauchy matrix of the first form,
				\begin{eqnarray*}
					A&=&\begin{bmatrix}
						\frac{1}{1-\xi^7}&\frac{1}{1-\xi^8}&\frac{1}{1-\xi^9}&\frac{1}{1-\xi^{10}}&\frac{1}{1-\xi^{11}}&\frac{1}{1-\xi^{12}}&\frac{1}{1-\xi^{13}}\\
						
						\frac{1}{\xi-\xi^7}&\frac{1}{\xi-\xi^8}&\frac{1}{\xi-\xi^9}&\frac{1}{\xi-\xi^{10}}&\frac{1}{\xi-\xi^{11}}&\frac{1}{\xi-\xi^{12}}&\frac{1}{\xi-\xi^{13}}\\
						\frac{1}{\xi^2-\xi^7}&\frac{1}{\xi^2-\xi^8}&\frac{1}{\xi^2-\xi^9}&\frac{1}{\xi^2-\xi^{10}}&\frac{1}{\xi^2-\xi^{11}}&\frac{1}{\xi^2-\xi^{12}}&\frac{1}{\xi^2-\xi^{13}}\\
						\frac{1}{\xi^3-\xi^7}&\frac{1}{\xi^3-\xi^8}&\frac{1}{\xi^3-\xi^9}&\frac{1}{\xi^3-\xi^{10}}&\frac{1}{\xi^3-\xi^{11}}&\frac{1}{\xi^3-\xi^{12}}&\frac{1}{\xi^3-\xi^{13}}\\
						\frac{1}{\xi^4-\xi^7}&\frac{1}{\xi^4-\xi^8}&\frac{1}{\xi^4-\xi^9}&\frac{1}{\xi^4-\xi^{10}}&\frac{1}{\xi^4-\xi^{11}}&\frac{1}{\xi^4-\xi^{12}}&\frac{1}{\xi^4-\xi^{13}}\\
						\frac{1}{\xi^5-\xi^7}&\frac{1}{\xi^5-\xi^8}&\frac{1}{\xi^5-\xi^9}&\frac{1}{\xi^5-\xi^{10}}&\frac{1}{\xi^5-\xi^{11}}&\frac{1}{\xi^5-\xi^{12}}&\frac{1}{\xi^5-\xi^{13}}\\
						\frac{1}{\xi^6-\xi^7}&\frac{1}{\xi^6-\xi^8}&\frac{1}{\xi^6-\xi^9}&\frac{1}{\xi^6-\xi^{10}}&\frac{1}{\xi^6-\xi^{11}}&\frac{1}{\xi^6-\xi^{12}}&\frac{1}{\xi^6-\xi^{13}}\\

					\end{bmatrix}\\
				\end{eqnarray*}
				\scalebox{0.60}{	$={\begin{bmatrix}
							2\xi^3+3\xi^2&2\xi^3+\xi^2+1&\xi^2+\xi &\xi^3+2\xi^2+1&2\xi^2+3\xi+2&3\xi^3+\xi^2+2&\xi^3+3\xi^2+3\xi\\
							3\xi^3+\xi^2+\xi+3&2\xi^2+3\xi&3\xi^3+3\xi^2+3&\xi+1&2\xi^3+\xi^2+2\xi+3&2\xi^3+2\xi^2+2\xi+3&2\xi^3+\xi^2+\xi\\
							3\xi^3+2\xi^2+2\xi&\xi^3+2\xi^2+3\xi+1&2\xi+3&3\xi^2+3\xi+3&3\xi^3+\xi^2+2\xi+1&\xi^3+\xi^2+3\xi+2&\xi^3+\xi^2+2\\
							3\xi^2+2\xi+1&2\xi^3+\xi^2+3\xi+1&3\xi^3+2\xi^2+3&\xi^3+3\xi^2+2\xi+2&\xi^3+3\xi^2+\xi+3&3\xi^3+3\xi+2&2\xi^3+3\xi^2+\xi+3\\
							3\xi^2+\xi+3&3\xi^3+\xi^2+\xi+2&2\xi^3+2\xi^2+3\xi+2&\xi^3+2\xi^2&2\xi^3+3\xi^2+3\xi+2&\xi^3+\xi+1&2\xi^3+\xi^2+3\\
							\xi^2+\xi+3&\xi^3+\xi^2+\xi&2\xi^2+\xi&\xi^3+2&\xi^2+3\xi+2&\xi^3+2\xi^2+3\xi+1&3\xi^2+2\xi+1\\
							3\xi+1&2\xi^3+3&3\xi^3+2\xi^2+\xi+1&\xi^2+\xi&\xi^3+\xi^2+2\xi+2&\xi+2&\xi^3+\xi^2+2\xi+3

					\end{bmatrix}}.$}\\

				For	$x_{i}=\xi^{i-1}$ and $y_{j}=\xi^{j+6}$, define Cauchy matrix of the second form,
				\begin{eqnarray*}
					A&=&\begin{bmatrix}
						\frac{1}{1+\xi^7}&\frac{1}{1+\xi^8}&\frac{1}{1+\xi^9}&\frac{1}{1+\xi^{10}}&\frac{1}{1+\xi^{11}}&\frac{1}{1+\xi^{12}}&\frac{1}{1+\xi^{13}}\\
						
						\frac{1}{\xi+\xi^7}&\frac{1}{\xi+\xi^8}&\frac{1}{\xi+\xi^9}&\frac{1}{\xi+\xi^{10}}&\frac{1}{\xi+\xi^{11}}&\frac{1}{\xi+\xi^{12}}&\frac{1}{\xi+\xi^{13}}\\
						\frac{1}{\xi^2+\xi^7}&\frac{1}{\xi^2+\xi^8}&\frac{1}{\xi^2+\xi^9}&\frac{1}{\xi^2+\xi^{10}}&\frac{1}{\xi^2+\xi^{11}}&\frac{1}{\xi^2+\xi^{12}}&\frac{1}{\xi^2+\xi^{13}}\\
						\frac{1}{\xi^3+\xi^7}&\frac{1}{\xi^3+\xi^8}&\frac{1}{\xi^3+\xi^9}&\frac{1}{\xi^3+\xi^{10}}&\frac{1}{\xi^3+\xi^{11}}&\frac{1}{\xi^3+\xi^{12}}&\frac{1}{\xi^3+\xi^{13}}\\
						\frac{1}{\xi^4+\xi^7}&\frac{1}{\xi^4+\xi^8}&\frac{1}{\xi^4+\xi^9}&\frac{1}{\xi^4+\xi^{10}}&\frac{1}{\xi^4+\xi^{11}}&\frac{1}{\xi^4+\xi^{12}}&\frac{1}{\xi^4+\xi^{13}}\\
						\frac{1}{\xi^5+\xi^7}&\frac{1}{\xi^5+\xi^8}&\frac{1}{\xi^5+\xi^9}&\frac{1}{\xi^5+\xi^{10}}&\frac{1}{\xi^5+\xi^{11}}&\frac{1}{\xi^5+\xi^{12}}&\frac{1}{\xi^5+\xi^{13}}\\
						\frac{1}{\xi^6+\xi^7}&\frac{1}{\xi^6+\xi^8}&\frac{1}{\xi^6+\xi^9}&\frac{1}{\xi^6+\xi^{10}}&\frac{1}{\xi^6+\xi^{11}}&\frac{1}{\xi^6+\xi^{12}}&\frac{1}{\xi^6+\xi^{13}}\\

					\end{bmatrix}.\\
				\end{eqnarray*}
			\end{example}\label{exa9}
			
			Next example justifies  Theorem \ref{t17}. In this example, we reduce number of entries of  matrix with the help of nilpotent element in Galois ring $GR(3^2,(3^2)^3)=\frac{\mathbb{Z}_{3^2}[x]}{(x^3+3x^2+2x+4)}.$ 
			
			\begin{example}\label{exa2}
				Let $GR(3^2,(3^2)^3)=\frac{\mathbb{Z}_{3^2}[x]}{(x^3+3x^2+2x+4)}$ be Galois ring of characteristic 9 with cardinality 729 and $\xi=x+(x^3+3x^2+2x+4)$ such that $o(\xi)=26$. Define $\tau= \{0,1,\xi,\xi^2,\dots, \xi^{25}\},$ $x_{i}=\xi^i;~~0\leq i\leq 5,$ and $y_{j}=\xi^{5+j};~~1\leq j\leq 6.$
				
				\begin{eqnarray*}
					A&=&\Bigg[\frac{1}{x_{i}+y_{j}}\Bigg]\\
					&=&\begin{bmatrix}
						\frac{1}{1+\xi^6}&\frac{1}{1+\xi^7}&\frac{1}{1+\xi^8}&\frac{1}{1+\xi^9}&\frac{1}{1+\xi^{10}}&\frac{1}{1+\xi^{11}}\\
						\frac{1}{\xi+\xi^6}&\frac{1}{\xi+\xi^7}&\frac{1}{\xi+\xi^8}&\frac{1}{\xi+\xi^9}&\frac{1}{\xi+\xi^{10}}&\frac{1}{\xi+\xi^{11}}\\
						\frac{1}{\xi^2+\xi^6}&\frac{1}{\xi^2+\xi^7}&\frac{1}{\xi^2+\xi^8}&\frac{1}{\xi^2+\xi^9}&\frac{1}{\xi^2+\xi^{10}}&\frac{1}{\xi^2+\xi^{11}}\\
						\frac{1}{\xi^3+\xi^6}&\frac{1}{\xi^3+\xi^7}&\frac{1}{\xi^3+\xi^8}&\frac{1}{\xi^3+\xi^9}&\frac{1}{\xi^3+\xi^{10}}&\frac{1}{\xi^3+\xi^{11}}\\
						\frac{1}{\xi^4+\xi^6}&\frac{1}{\xi^4+\xi^7}&\frac{1}{\xi^4+\xi^8}&\frac{1}{\xi^4+\xi^9}&\frac{1}{\xi^4+\xi^{10}}&\frac{1}{\xi^4+\xi^{11}}\\
						\frac{1}{\xi^5+\xi^6}&\frac{1}{\xi^5+\xi^7}&\frac{1}{\xi^5+\xi^8}&\frac{1}{\xi^5+\xi^9}&\frac{1}{\xi^5+\xi^{10}}&\frac{1}{\xi^5+\xi^{11}}
					\end{bmatrix}.
				\end{eqnarray*}

				For $l=6 \in \mathcal{N}(GR(3^2,(3^2)^3))$ and $x_{i}=\xi^i;~~0\leq i \leq 5$, define
				\begin{eqnarray*}
					B&=&\Bigg[\frac{1}{x_{i}+y_{j}}\Bigg]=\Bigg[\frac{1}{\xi^i+\xi^j+6}\Bigg]\\
					&=&\begin{bmatrix}
						\frac{1}{8}&\frac{1}{7+\xi}&\frac{1}{7+\xi^2}&\frac{1}{7+\xi^3}&\frac{1}{7+\xi^4}&\frac{1}{7+\xi^5}\\
						\frac{1}{7+\xi}&\frac{1}{6+2\xi}&\frac{1}{6+\xi+\xi^2}&\frac{1}{6+\xi+\xi^3}&\frac{1}{6+\xi+\xi^4}&\frac{1}{6+\xi+\xi^5}\\
						\frac{1}{7+\xi^2}&\frac{1}{6+\xi^2+\xi}&\frac{1}{6+\xi^2+\xi^2}&\frac{1}{6+\xi^2+\xi^3}&\frac{1}{6+\xi^2+\xi^4}&\frac{1}{6+\xi^2+\xi^5}\\
						\frac{1}{7+\xi^3}&\frac{1}{6+\xi^3+\xi}&\frac{1}{6+\xi^3+\xi^2}&\frac{1}{6+\xi^3+\xi^3}&\frac{1}{6+\xi^3+\xi^4}&\frac{1}{6+\xi^3+\xi^5}\\
						\frac{1}{7+\xi^4}&\frac{1}{6+\xi^4+\xi}&\frac{1}{6+\xi^4+\xi^2}&\frac{1}{6+\xi^4+\xi^3}&\frac{1}{6+\xi^4+\xi^4}&\frac{1}{6+\xi^4+\xi^5}\\
						\frac{1}{7+\xi^5}&\frac{1}{6+\xi^5+\xi}&\frac{1}{6+\xi^5+\xi^2}&\frac{1}{6+\xi^5+\xi^3}&\frac{1}{6+\xi^5+\xi^4}&\frac{1}{6+\xi^5+\xi^5}
					\end{bmatrix},
				\end{eqnarray*}
			\end{example}
			where $B$ is symmetric MDS Cauchy matrix of TYPE-II of order 6.\\\\
			In the forthcoming example, we  demonstrate, how a new Cauchy MDS matrix can be create from a given Cauchy matrix, in view of Theorem \ref{t10}.
			\begin{example}\label{exa3}
				Let $GR(2^2,2^8)=\frac{\mathbb{Z}_{4}[x]}{(x^4+2x^2+3x+1)}$ be Galois ring of characteristic 4 with cardinality 256 and $\xi=x+(x^4+2x^2+3x+1)$ such that $o(\xi)=15$. Define  automorphisms 
				\begin{eqnarray*}
					\phi^i : GR(2^2,2^8)&\rightarrow& GR(2^2,2^8)~by\\
					\phi^i(a_{0}+a_{1}\xi+a_{2}\xi^2+a_{3}\xi^3)&=&a_{0}+a_{1}\xi^{2^i}+a_{2}\xi^{2\cdot2^i}+a_{3}\xi^{3\cdot2^i},
				\end{eqnarray*}
				where $0\leq i \leq 3$. This gives
				\begin{eqnarray*}
					\phi(a_{0}+a_{1}\xi+a_{2}\xi^2+a_{3}\xi^3)&=&a_{0}+a_{1}\xi^{2}+a_{2}\xi^{4}+a_{3}\xi^{6},\\
					\phi^2(a_{0}+a_{1}\xi+a_{2}\xi^2+a_{3}\xi^3)&=&a_{0}+a_{1}\xi^{4}+a_{2}\xi^{8}+a_{3}\xi^{12},\\
					\phi^3(a_{0}+a_{1}\xi+a_{2}\xi^2+a_{3}\xi^3)&=&a_{0}+a_{1}\xi^{8}+a_{2}\xi+a_{3}\xi^{9}.
				\end{eqnarray*}
				Suppose $x_{1}=1,~x_{2}=\xi,~x_{3}=\xi^2,~x_{4}=\xi^3$, and $y_{1}=\xi^4,~y_{2}=\xi^5,~y_{3}=\xi^6,~y_{4}=\xi^7.$
				In view of  Theorem \ref{t6}, the Cauchy matrix defined by 
				\begin{eqnarray*}
			B&=&\begin{bmatrix}
						\frac{1}{1-\xi^4}&\frac{1}{1-\xi^5}&\frac{1}{1-\xi^6}&\frac{1}{1-\xi^7}\\
						\frac{1}{\xi-\xi^4}&\frac{1}{\xi-\xi^5}&\frac{1}{\xi-\xi^6}&\frac{1}{\xi-\xi^7}\\
						\frac{1}{\xi^2-\xi^4}&\frac{1}{\xi^2-\xi^5}&\frac{1}{\xi^2-\xi^6}&\frac{1}{\xi^2-\xi^7}\\
						\frac{1}{\xi^3-\xi^4}&\frac{1}{\xi^3-\xi^5}&\frac{1}{\xi^3-\xi^6}&\frac{1}{\xi^3-\xi^7}
					\end{bmatrix}\\
					&=&\begin{bmatrix}
						3+2\xi+2\xi^2+3\xi^3&2+\xi+3\xi^2+2\xi^3&3\xi^2+2\xi^3&2+\xi^2+\xi^3\\
						1+2\xi^2&1+3\xi^2+\xi^3&3+3\xi+2\xi^2+2\xi^3&3\xi+2\xi^2\\
						3\xi+3\xi^2+2\xi^3&1+3\xi^3&1+\xi+\xi^2+3\xi^3&2+2\xi^2+\xi^3\\
						3+2\xi+\xi^2+2\xi^3&3+3\xi+2\xi^2&1+2\xi+3\xi^2+3\xi^3&2+3\xi+3\xi^2+3\xi^3
						
					\end{bmatrix}
				\end{eqnarray*}
				is an MDS matrix.
				
				Further, by Theorem \ref{t10}, the matrices defined by
				\begin{eqnarray*}
					A_{1}&=&\Bigg[\frac{1}{\phi(x_{i}-y_{j})}\Bigg]\\
					&=&\begin{bmatrix}
						3+3\xi^2+3\xi^3&3+\xi+\xi^2+2\xi^3&1+3\xi+2\xi^3&3+3\xi+\xi^2+\xi^3\\
						3+2\xi^2&\xi+\xi^2+\xi^3&1+2\xi+\xi^2+2\xi^3&2+2\xi+3\xi^2\\
						1+3\xi+3\xi^2+2\xi^3&3+2\xi+\xi^2+3\xi^3&2+3\xi+3\xi^3&2+3\xi^2+\xi^3\\
						2+\xi+2\xi^2+2\xi^3&1+2\xi+3\xi^2&\xi+\xi^2+3\xi^3&1+\xi+2\xi^2+3\xi^3
						
					\end{bmatrix},	\\
				\end{eqnarray*}
				\begin{eqnarray*}
					A_{2}&=&\Bigg[\frac{1}{\phi^2(x_{i}-y_{j})}\Bigg]\\
					&=&\begin{bmatrix}
						2+\xi+3\xi^2+3\xi^3&2+\xi+3\xi^2+2\xi^3&1+\xi^2+2\xi^3&3\xi+\xi^3\\
						3+2\xi^2&1+3\xi+2\xi^2+\xi^3&\xi+2\xi^2+2\xi^3&3+3\xi\\
						2+3\xi+3\xi^2+2\xi^3&3\xi+\xi^2+3\xi^3&2\xi+3\xi^3&1+\xi+\xi^2+\xi^3\\
						2\xi+3\xi^2+2\xi^3&2+3\xi&1+3\xi+3\xi^3&1+2\xi^2+3\xi^3

					\end{bmatrix},	
				\end{eqnarray*}
				\begin{eqnarray*}
					A_{3}&=&\Bigg[\frac{1}{\phi^3(x_{i}-y_{j})}\Bigg]\\
					&=&\begin{bmatrix}
						1+\xi+3\xi^3&3+3\xi+\xi^2+2\xi^3&\xi+2\xi^2&2+2\xi+2\xi^2+\xi^3\\
						1+2\xi^2&1+2\xi^2+\xi^3&2+2\xi+3\xi^2+2\xi^3&3+3\xi^2\\
						3+3\xi+3\xi^2+2\xi^3&1+3\xi+2\xi^2+3\xi^3&2+2\xi+3\xi^2+3\xi^3&2+3\xi+2\xi^2+\xi^3\\
						1+3\xi+2\xi^2+2\xi^3&2+3\xi^2&3+2\xi+3\xi^3&1+2\xi^2+3\xi^3
					\end{bmatrix}	
				\end{eqnarray*}
				are three Cauchy MDS matrices of order 4.
			\end{example}
			\begin{remark}
				Note that, by using Theorem \ref{t7} , we can obtain 240 and 702 new MDS matrices in Example \ref{exa1} and \ref{exa2}, respectively.
			\end{remark}
			\begin{example}
				Let $GR(3^2,~3^4)$ be a Galois ring defined by the basic irreducible polynomial $p_1(x)=5x^2+2x+4$. Let $\eta_1=x+(5x^2+2x+4)$ is a primitive root of $p_{1}(x)$ and  $$B=\begin{bmatrix}
					4\xi+1&5&7\xi+3\\
					\xi+2&7\xi+3&5\xi+2\\
					3\xi+2&2\xi&3\xi+1
				\end{bmatrix},$$
				is a 3$\times$3 MDS matrix over $GR(3^2,~3^4)|_{p_1(x)}$, where $GR(3^2,~3^4)|_{p_1(x)}$ represents Galois ring defined by the polynomial $p_1(x)$.  Suppose the primitive element $\eta_2$ of $GR(3^2,~3^4)|_{p_1(x)}$, which is also a root of $5x^2+7x+4.$ Then, one can obtain 2 distinct isomorphisms from $GR(3^2,~3^4)|_{p_1(x)}$ to $GR(3^2,~3^4)|_{p_1(x)}$ by computing $s_u$ values (which are $s_0=5$ and $s_1=7$). These isomorphism are $f_{5,1}:\eta_1\mapsto \eta^5_2$ and $f_{7,1}:\eta_1\mapsto \eta^5_2$. For example, by using isomorphism, we can generate 3$\times$3 MDS matrix $B'= \begin{bmatrix}
					4\xi+1&5&7\xi+3\\
					\xi+2&7\xi+3&5\xi+2\\
					3\xi+2&2\xi&3\xi+1
				\end{bmatrix}$ over $GR(3^2,~3^4)|_{p_1(x)}$ from $B$ over $GR(3^2,~3^4)|_{p_1(x)}$. Similarly, we get another MDS matrix $B''= \begin{bmatrix}
					4\xi+1&5&7\xi+3\\
					\xi+2&7\xi+3&5\xi+2\\
					3\xi+2&2\xi&3\xi+1
				\end{bmatrix}$ over $GR(3^2,~3^4)|_{p_1(x)}$   of order 3. Hence, with the help of Theorem \ref{t44}, we get 72 more MDS matrices over $GR(3^2,~3^4)|_{p_1(x)}.$
				
			\end{example}
					%
					%
					%
					%
					%
					%
					%
					%
					%
					%
					%
					%


\section{\textbf{Conclusion}}
In the present paper,  we constructed MDS matrices using Cauchy matrices over Galois rings. We  developed a novel approach by deploying the Frobenius automorphism within Galois rings. Moreover, we   constructed a Cauchy MDS matrix of order 6 which was not known before using the concept of nilpotent elements. Additionally, we achieved a reduction in the size of Cauchy matrices with the help of nilpotent elements, which offer computational advantages over unit elements due to their eventual reduction to zero after a specific number of operations. Furthermore, we derived functions related to the automorphisms and isomorphisms of Galois ring which preserve MDS property. As a result, these functions can be used to generate new MDS matrices from a given MDS matrix. These are generic functions and can be used for the constructions of other types of MDS matrices, not necessarily Cauchy. In future work, we will try to investigate the conditions under which we can identify compact Cauchy matrices that allow the reduction of the number of entries from \(k^2\) to \(k\).

\section{\bf Declarations}

\noindent \textbf{Funding}\newline
Not applicable.\newline

\noindent \textbf{Data Availability Statement}\newline
Data sharing is not applicable to this article as no data sets were
generated or analyzed during the current study.\newline

\noindent \textbf{Conflicts of Interest}\newline
The authors declare that they have no conflicts of interest. \newline

\end{document}